\newtheorem{tw}{Theorem}[section]
\newtheorem{lem}[tw]{Lemma}%[section]
\newtheorem{cor}[tw]{Corollary}%[section]
\newtheorem{prop}[tw]{Proposition}%[section]
\newtheorem{defi}{Definition}[section]
\newtheorem*{rem}{Remark}
\theoremstyle{remark}
\newtheorem{example}{Example}[section]
\def\={\hspace{-3mm}&=&\hspace{-3mm}}
\renewcommand{\baselinestretch}{1,2}
\title{A Laplace transform approach to linear equations with infinitely many 
derivatives and zeta-nonlocal field equations}
\author{ A. Ch\'avez$^1$\footnote{E-mail: alancallayuc@gmail.com}~, 
H. Prado$^2$\footnote{E-mail: humberto.prado@usach.cl}~,
and E.G. Reyes$^3$\footnote{E-mail: enrique.reyes@usach.cl ;
e\_g\_reyes@yahoo.ca} \\
\small{$^{1}$Departamento de Matem\'aticas, Facultad de Ciencias,} \\
\small{Universidad de Chile,}\\
\small{ Casilla 653, Santiago, Chile}\\
%\small{Universidad de Chile,}\\
%\small{Santiago, Chile} \\
\smallskip
\small{$^{2,3}$ Departamento de Matem\'atica y Ciencia de la
Computaci\'on,} \\
\small{ Universidad de Santiago de Chile }\\
\small{Casilla 307 Correo 2, Santiago, Chile }}
\begin{document}

\maketitle

\begin{abstract}
We study existence, uniqueness and regularity of solutions for linear equations in 
infinitely many derivatives.
We develop a natural framework based on Laplace transform as a correspondence 
between appropriate $L^p$ and
Hardy spaces: this point of view allows us to interpret rigorously operators of the 
form $f(\partial_t)$ where 
$f$ is an analytic function such as (the analytic continuation of) the Riemann zeta 
function. We find the most general solution to the equation
\begin{equation*}
f(\partial_t) \phi = J(t) \; , \; \; \; t \geq 0 \; ,
\end{equation*}
in a convenient class of functions, we define and solve its corresponding 
initial value problem, and we state conditions under which the
solution is of class $C^k,\, k \geq 0$. More specifically, we prove that
if some a priori information is specified, then the initial value
problem is well-posed and it can be solved using only a {\em
finite number} of local initial data. Also, motivated by some
intriguing work by Dragovich and Aref'eva-Volovich on 
cosmology, we solve explicitly field equations of the form
\begin{equation*}
\zeta(\partial_t + h) \phi = J(t) \; , \; \; \; t \geq 0 \; ,
\end{equation*}
in which $\zeta$ is the Riemann zeta function and $h > 1$. 
Finally, we remark that the $L^2$ case of our general theory 
allows us to give a precise meaning to the
often-used interpretation of $f(\partial_t)$ as an operator defined by a power 
series in the differential operator $\partial_t$.
\end{abstract}

\section{Introduction}\label{sec:Introduction}

Equations with an infinite number of derivatives have appeared recently
as field equations of motion in particle physics \cite{Mo1}, string 
theory \cite{BCK,D1,D2,EW,M,Ta,V,VV,W}, and (quantum) gravity and
cosmology \cite{AV,AV2,B,BBC,BK1,BK2,BGKM,Cal1,CalMo,Ko,Moeller}.
For instance, an important equation in this class is
\begin{equation} \label{padic-1}
p^{a\, \partial^2_t}\phi = \phi^p \; , \; \; \; \; \; a > 0 \; ,
\end{equation}
where $p$ is a prime number. Equation (\ref{padic-1}) describes the dynamics 
of the open $p$-adic string for the scalar tachyon field (see 
\cite{AV,BK2,D,Moeller,V,VV} and references therein) and it can be understood, 
at least formally, as an equation in an 
infinite number of derivatives if we expand the entire
function (called below the ``symbol" of the equation)
$$f(s):=p^{as^2}=e^{as^2\, log(p)}$$
as a power series around zero and we replace $s$ for $\partial_t$.
This equation has been studied by Vladimirov via integral
equations of convolution type in \cite{V,VV} (see also
\cite{AV,Moeller}), and it has been also noted that in the limit
$p \rightarrow 1$, Equation (\ref{padic-1}) becomes the {\em
local} logarithmic Klein-Gordon equation \cite{BG,GS}. Moreover,
Dragovich, see \cite{D}, has used (the Lagrangian formulation of) Equation (\ref{padic-1}) and the
Euler product for the Riemann zeta function
$$\zeta(s):=\sum_{n=1}^{\infty}\dfrac{1}{n^s}\; , \quad \quad Re(s)>1\; ,$$
see \cite{KaVo}, to deduce a string theory field equation of the form
\begin{equation} \label{zeta}
\zeta \left(- \frac{1}{m^2} \, \partial_t^2 + h \right) \phi = U(\phi) \; ,
\end{equation}
in which $m,h$ are constant real numbers and $U$ is a nonlinear
function of $\phi$. Equation (\ref{zeta}) is called a zeta-nonlocal field equation
in \cite{D}. 

In this paper we focus on analytic properties of {\em linear} equations
of the form
\begin{equation} \label{abs222}
f(\partial_t) \phi = J(t) \; , \quad t \geq 0 \; ,
\end{equation}
in which $f$ is an analytic function and $J$ belongs to an
adequate class of functions to be specified in Section 2. We also
investigate carefully the formulation, existence and solution of
initial value problems, and we explain precisely in what sense
Equation (\ref{abs222}) really {\em is} an ordinary differential
equation in infinitely many derivatives. Motivated mainly by
Dragovich's work (see also \cite{AV2}), we endeavor to develop a
theory for $f(\partial_t)$ able to deal with the case in which $f$
is, for instance, the Riemann zeta function. We think this
flexibility is important: it does not appear to be straightforward
to use the previous works \cite{GPR_JMP,GPR_CQG} for equations of
the form $\zeta(\partial_t + h) \phi = J$, as we explain in
Section 2.

We remark that linear equations in infinitely many derivatives appeared
in mathematics already in the final years of the XIX Century, see for
instance \cite{Bou,Car} and further references in \cite{BK1}. 
It appears to us, however, that a truly fundamental stimulus for
their study has been the realization that linear and nonlinear
nonlocal equations play an important role in contemporary physical
theories, as witnessed by \cite{PU} and the more recent papers
cited above.

Now, a serious problem for the development of a rigorous --yet
flexible enough to allow for explicit computations-- theory for
nonlocal equations  has been the difficulties inherent in the
understanding of the initial value problem for equations such as
(\ref{padic-1}) and (\ref{abs222}). An interpretative difficulty
considered already in the classical paper \cite{EW} by Eliezer and
Woodard is the frequently stated argument (see for instance
\cite{B,Moeller} and the rigorous approach of \cite{CPR}) that if
an $n$th order ordinary differential equation requires $n$ initial
conditions, then the ``infinite order'' equation
$F(t,q',q^{''},\cdots , q^{(n)},\cdots)=0$ requires infinitely
many initial conditions, and therefore the solution $q$ would be
determined a priori (via power series) without reference to the
actual equation.

In order to deal with this difficulty, we investigate initial
value problems from scratch. Our approach is to emphasize the role
played by the Laplace transform, in the spirit of
\cite{GPR_JMP,GPR_CQG} and the interesting papers
\cite{AV,AV2,BK1,BK2}. Our solution (Section 3.2 below) is that if
an a priori data directly connected with our interpretation of
$f(\partial_t)$ is specified (Equation (42) below) then the
initial value problem is well-posed and it requires only a {\em
finite number} of initial conditions. 
Due to the preeminence of Laplace transform in our theory, it is
natural for us to speak indistinctly of ``nonlocal equations'' or
``equations in an infinite number of derivatives'', and we do so
hereafter. 

We finish this Introduction with the remark that our
work is not necessarily part of the classical theory of
pseudo-differential operators for two reasons. First, the symbol $f$ appearing in
Equation (\ref{abs222}) is (in principle) an arbitrary analytic
function, and therefore the ``operator" $f(\partial_t)$ may be
beyond the reach of classical tools such as the ones appearing in
\cite{H}. Second, we aim at solving nonlocal equations on the
semi-axis $t \geq 0$, not on the real line where perhaps we could
use Fourier transforms as in the intriguing papers \cite{AV2,D},
and/or classical pseudo-differential analysis \cite{H}; motivation for the study
of nonlocal equations on a semi-axis comes, for example, from cosmology \cite{AV,MN}: 
classical versions of Big Bang cosmological models contain singularities at the beginning 
of time, and therefore the time variable appearing in the field
equations should vary over a half line.

\smallskip

This paper is organized as follows. In Section 2 we consider a
rigorous interpretation of nonlocal ordinary differential
equations via Laplace transform considered as an operator between
appropriate Banach spaces. In Section 3 we investigate (and propose a method
for the solution of) initial value problems for linear equations
of the form (\ref{abs222}). In Section 4 we solve a class of zeta
nonlocal equations and in Section 5 we provide an $L^2$-theory for
operators $f(\partial_t)$. Finally in Section 6 we discuss some
problems which are beyond the reach of the theory presented here;
they will be considered in the companion article \cite{CPR_Borel}.

\section{Linear nonlocal equations}

In this section we introduce a rigorous and computationally useful
framework for the analytic study of nonlocal equations, including
the possibility of setting up meaningful initial value problems.
We consider linear nonlocal equations of the form
\begin{equation} \label{lor_linear}
  f(\partial_t)\phi(t) - J(t) =0 \; , \; \; \; t \geq 0\; ,
\end{equation}
in which $f$ is an analytic function.

\subsection{Preliminaries}

We begin by fixing our notation and stating some preliminary facts; 
our main source is the classical treatise \cite{Doetsch} by G. Doetsch.

We recall that a function $g$ belongs to $L^1_{loc}(\mathbb{R}_+)$, in which
$\mathbb{R}_+$ is the interval $[0,\infty)$, if and only if $$\int_K |g(t)|dt$$
exists for any compact set $K \subseteq \mathbb{R}_+$.

\begin{defi}
A function $g : \mathbb{R} \rightarrow \mathbb{C}$ in $L^1_{loc}(\mathbb{R}_+)$ 
belongs to the class $\mathcal{T}_a$ if and only if $g(t)$ is
identically zero for $t < 0$ and the integral
\[
\int_0^\infty e^{- s\,t}\,g(t)\,dt
\]
converges absolutely for $Re(s) > a$.
\end{defi}

$\mathcal{T}_a$ is simply a vector space; we do not need to endow
it with a topology. The {\em Laplace transform} of a function $g :
\mathbb{R} \rightarrow \mathbb{C}$ in $\mathcal{T}_a$ is the
integral
\[
{\cal L}(g)(s) = \int_0^\infty e^{- s\,t}\,g(t)\,dt \; , \quad \;
Re(s) > a \; .
\]
As proven in \cite[Theorem 3.1]{Doetsch}, if $g \in
\mathcal{T}_a$, then ${\cal L}(g)(s)$ converges absolutely and
uniformly for $Re(s) \geq x_0 > a$ (a finer statement is in
\cite[Theorem 23.1]{Doetsch}). It is also known (see \cite[Theorem
6.1]{Doetsch}) that the Laplace transform ${\cal L}(g)$ is an
analytic function for $Re(s) > a$. We also record the following:

\begin{prop} \label{proplaplace}
~
\begin{enumerate}
\item Let us fix a function $g \in \mathcal{T}_a$. If the
derivative $D^k g (t)$ belongs to $\mathcal{T}_a$ for $0 \leq k
\leq n-1$, then $D^n g(t)$ belongs to $\mathcal{T}_a$ and
\[
{\cal L}(D^n g)(s) = s^n\, {\cal L}(g)(s) - \sum_{j=1}^n (D^{j-1}
g)(0) s^{n-j} \; .
\]
\item If $g \in \mathcal{T}_a$ is a continuous function, then for any $\sigma > a$ we have
the identity
\begin{equation} \label{inv0}
g (t) = \frac{1}{2 \pi i}\,\int_{\sigma - i \infty}^{\sigma + i
\infty} e^{s\,t}\, {\cal L}(g)(s)\, ds \; 
\end{equation}
for all $t \geq 0$. 
\end{enumerate}
\end{prop}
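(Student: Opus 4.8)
Both assertions are classical facts about the Laplace transform, so the plan is to reproduce the standard arguments while invoking only the absolute convergence guaranteed for $Re(s) > a$ and the analyticity of ${\cal L}(g)$ recorded above.

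For part 1 I would first treat $n = 1$ and then induct. Integration by parts gives
\[
\int_0^\infty e^{-s\,t}\, (Dg)(t)\, dt = \Big[ e^{-s\,t} g(t) \Big]_0^\infty + s \int_0^\infty e^{-s\,t} g(t)\, dt \; ,
\]
so everything reduces to showing that the boundary term at $+\infty$ vanishes when $Re(s) > a$. The clean way to see this is to work first with a real exponent $\rho$, $a < \rho < Re(s)$: differentiating $e^{-\rho t} g(t)$ and integrating, the identity $e^{-\rho t} g(t) - g(0) = \int_0^t \big( e^{-\rho \tau}(Dg)(\tau) - \rho\, e^{-\rho \tau} g(\tau) \big)\, d\tau$ shows that $e^{-\rho t} g(t)$ has a finite limit as $t \to \infty$, because both integrals converge absolutely once $g$ and $Dg$ lie in $\mathcal{T}_a$; and this limit must be $0$, since otherwise $e^{-\rho t}|g(t)|$ would be bounded below away from $0$, contradicting the convergence of $\int_0^\infty e^{-\rho t}|g(t)|\, dt$. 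Writing $|e^{-s t} g(t)| = e^{-(Re(s)-\rho)t}\,(e^{-\rho t}|g(t)|)$ then forces the boundary term to vanish, yielding ${\cal L}(Dg)(s) = s\,{\cal L}(g)(s) - g(0)$. Feeding this back into itself for $g, Dg, \dots, D^{n-1}g$ and collecting the boundary contributions $(D^{j-1}g)(0)\,s^{n-j}$ produces the stated formula by a routine induction; the same computation certifies that $D^n g \in \mathcal{T}_a$.

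For part 2 the strategy is to convert the Bromwich integral into a Fourier inversion. Fix $\sigma > a$ and set $h(t) := e^{-\sigma t} g(t)$ for $t \geq 0$ and $h(t) := 0$ for $t < 0$; since $g$ is continuous and vanishes for $t < 0$ we have $g(0) = 0$, so $h$ is continuous on all of $\mathbb{R}$, and absolute convergence at $\sigma$ gives $h \in L^1(\mathbb{R})$. A direct computation identifies the Laplace transform on the vertical line with a Fourier transform,
\[
\widehat{h}(\xi) = \int_{-\infty}^\infty e^{-i\xi t} h(t)\, dt = {\cal L}(g)(\sigma + i\xi) \; .
\]
Recovering $h$ from $\widehat{h}$, multiplying by $e^{\sigma t}$, and substituting $s = \sigma + i\xi$ (so that $ds = i\, d\xi$) transforms the Fourier inversion integral into exactly $\frac{1}{2\pi i} \int_{\sigma - i\infty}^{\sigma + i\infty} e^{s t}\, {\cal L}(g)(s)\, ds$ and returns $g(t)$, which is the claimed identity.

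I expect the Fourier inversion step to be the main obstacle: continuity together with $h \in L^1$ is not enough on its own to guarantee pointwise recovery by an ordinarily convergent integral. The cleanest resolution is to read the vertical integral as the symmetric principal value $\lim_{R \to \infty} \int_{\sigma - iR}^{\sigma + iR}$ and to invoke a Dini/Jordan-type inversion theorem, where the continuity (and mild local regularity) of $g$ is precisely what ensures the symmetric limit reproduces $g(t)$ rather than an average of one-sided values; this is the content of Doetsch's inversion theorem. Should one prefer an elementary argument, it suffices to add the hypothesis ${\cal L}(g)(\sigma + i\,\cdot\,) \in L^1(\mathbb{R})$, for then $\widehat{h} \in L^1$ and the classical $L^1$ Fourier inversion theorem applies directly, with dominated convergence removing any summability factor.
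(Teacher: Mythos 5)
The paper does not actually prove this proposition: it records part 1 without argument and attributes part 2 to Doetsch's Theorem 24.4, so there is no in-paper proof to match step for step. Your reconstruction is essentially the classical argument those citations stand for, and both halves are sound in outline. For part 1, the integration by parts together with your careful treatment of the boundary term (the FTC identity forces $e^{-\rho t}g(t)$ to have a finite limit, which absolute convergence of $\int_0^\infty e^{-\rho t}|g|\,dt$ forces to be zero) is exactly the textbook proof. One caveat: your closing claim that ``the same computation certifies that $D^n g\in\mathcal{T}_a$'' overstates what was shown. Integration by parts yields that $\int_0^\infty e^{-st}(Dg)(t)\,dt$ \emph{converges} to $s\mathcal{L}(g)(s)-g(0)$, but membership in $\mathcal{T}_a$ as the paper defines it requires \emph{absolute} convergence, and that does not follow --- indeed $g(t)=e^{-t}\sin(e^{2t})$ lies in $\mathcal{T}_{-1}$ while $\int_0^\infty e^{-xt}|g'(t)|\,dt$ diverges for $x\le 1$. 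This is really a looseness in the statement itself (Doetsch's derivative theorem is usually phrased in the opposite direction, or with ordinary rather than absolute convergence), but your proof should not claim to have established absolute convergence.

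For part 2, your reduction of the Bromwich integral to Fourier inversion of $h=e^{-\sigma t}g\cdot\chi_{[0,\infty)}$ is the standard route, and you are right to flag the inversion step as the real issue: continuity plus $h\in L^1(\mathbb{R})$ does not give pointwise recovery, even for the symmetric principal value (du Bois-Reymond-type examples). The resolution is exactly as you say --- read the vertical integral as $\lim_{R\to\infty}\int_{\sigma-iR}^{\sigma+iR}$ and invoke a Jordan/Dini-type theorem, which is what Doetsch's Theorem 24.4 does; but note that this requires a genuine extra local hypothesis (bounded variation or a Dini condition near $t$), not merely ``mild regularity,'' so strictly speaking the proposition as stated is being read with Doetsch's hypotheses understood. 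Your alternative fix, assuming $\mathcal{L}(g)(\sigma+i\,\cdot)\in L^1(\mathbb{R})$, is legitimate and is in fact the form in which the inversion formula is exploited later in the paper (Lemma 3.1). A final small point: at $t=0$ the symmetric inversion returns $\tfrac12 g(0^+)$ unless one adopts your reading that $g$ is continuous on all of $\mathbb{R}$, hence $g(0)=0$; the paper's ``for all $t\ge 0$'' glosses over the same boundary case.
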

The second part of this proposition is in  \cite[Theorem
24.4]{Doetsch}. The class $\mathcal{T}_a$ appears to be the
simplest class of functions on which the Laplace transform is
defined and for which there exists an inversion formula
(\ref{inv0}). The inverse Laplace transform is denoted by $\mathcal{L}^{-1}$.

We recall that we are interested in a theory able to deal with
(\ref{lor_linear}) in which, for instance, $f(s) = \zeta(s+h)$.
This is not a trivial requirement: let us suppose that we are
interested in applying Theorem 3.1 of \cite{GPR_CQG} on classical
initial value problems to the equation $\zeta(\partial_t+h)\phi(t)
= J(t)$. Then, we would need to check that
$\mathcal{L}(J)(s)/\zeta(s + h)$ belongs to the Widder space
$$C^\infty_W (\omega , \infty ) = \left \{ r : (\omega , \infty)
\rightarrow \mathbb{C}\; / \; \| r \|_W = \displaystyle\sup_{n \in
\mathbb{N}_0} \displaystyle \sup_{ s > \omega } \left|
\frac{(s-\omega)^{n+1}}{n !}\, r^{(n)}(s) \right| < \infty \right
\}\; ,$$ in which $\omega > 0$ and $r^{(n)}(s)$ denotes the $n$th
derivative of $r(s)$. Certainly, such a check does not look
straightforward. It is important therefore to develop an alternative approach
to nonlocal equations.

The class $\mathcal{T}_a$ appears to be a good starting point. Regretfully, because 
of reasons to be explained in Section 4, it is not adequate for defining and solving initial
value problems. We have found that a better alternative is to consider Laplace transform as a correspondence 
from Lebesgue spaces $L^p(\mathbb{R}_+)$ into Hardy spaces $H^q(\mathbb{C}_+)$, which we now define.

We write $\mathbb{C}_+$ for the right half-plane $\{s\in \mathbb{C}\; : Re(s)>0
\}$. The space $L^p(\mathbb{R}_+)$, $1 \leq p < \infty$, is the Lebesgue
space of measurable functions $\phi$ on $[0,\infty)$ such that
$$||\phi ||_{L^p(\mathbb{R}_+)}:=\left( \int_0^{\infty}|\phi(x)|^pdx \right)^{\frac{1}{p}} \; <\; \infty\; ,$$
and the $q$th Hardy space $H^q(\mathbb{C}_+)$ is the space of all functions $\Phi$ 
which are analytic 
on $\mathbb{C}_+$ and such that the integral $\mu_q(\Phi,x)$ given by
$$\mu_q(\Phi,x):= \left( \frac{1}{2\pi}\int_{-\infty}^{\infty}|\Phi(x+iy)|^q dy \right)^{\frac{1}{q}} \; ,$$
is uniformly bounded for $x > 0$. We note that $H^q(\mathbb{C}_+)$ 
becomes a Banach space with the norm 
$||\Phi||_{H^q(\mathbb{C}_+)}:=\sup_{x>0}\mu_q(\Phi,x)$.

The following classic Representation theorem was first presented by Doetsch in 
\cite[pp. 276 and 279]{Doetsch2}:

\begin{tw}\label{twDoe}
(Doetsch's Representation theorem)
\begin{itemize}
\item [(i)] If $\phi \in L^p(\mathbb{R}_+)$, where $1<p\leq 2$,
and $\Phi=\mathcal{L}(\phi)$, then $\Phi \in H^{p'}(\mathbb{C}_+)$
with $\frac{1}{p}+ \frac{1}{p'}=1$. Moreover if $x>0$ there exists
a positive constant $C(p)$ such that:
$$ \mu_{p'}(\Phi,x)\leq C(p) \left( \int_0^{\infty}e^{-pxt}|\phi(t)|^p dt \right)^{\frac{1}{p}}\; .$$
\item[(ii)] If $\Phi \in H^{p}(\mathbb{C}_+)$, where $1<p\leq 2$,
then there exists $\phi \in L^{p'}(0,\infty)$ with $\frac{1}{p}+
\frac{1}{p'}=1$ suth that $\Phi=\mathcal{L}(\phi)$. The function
$\phi$ is given by the inversion formula
$$\phi(t):=\lim_{v\to \infty}\frac{1}{2\pi}\int_{-v}^{v} e^{(\sigma + i\eta)t}\Phi(\sigma + i\eta ) d\eta\; , \; \; 
\quad \sigma \geq 0\; ,$$
in which the limit is understood in $L^{p'}(\mathbb{R}_+)$, 
and for $x>0$ there exists a positive constant $K(p)$ such that
$$\left( \int_0^{\infty}e^{-p' x t}|\phi(t)|^{p'} dt \right)^{\frac{1}{p'}}\leq K(p)\mu_{p}(\Phi,x).$$
\end{itemize}
\end{tw}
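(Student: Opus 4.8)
The plan is to convert both statements into assertions about the Fourier transform and then run everything through the Hausdorff--Young inequality, which plays here the role that the Plancherel theorem plays in the classical $L^2$ Paley--Wiener theory. The basic observation is that if we write $s = x + iy$ with $x > 0$ and set $g_x(t) = e^{-xt}\phi(t)$ (extended by zero to $t < 0$), then
\[
\Phi(x+iy)=\int_0^\infty e^{-(x+iy)t}\phi(t)\,dt=\int_{-\infty}^\infty e^{-iyt}g_x(t)\,dt=\widehat{g_x}(y),
\]
so that for each fixed $x$ the function $y\mapsto\Phi(x+iy)$ is exactly the Fourier transform of $g_x$. Both halves of the theorem will be read off from this identity.

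For part (i), first I would check that $\Phi$ is well defined and analytic on $\mathbb{C}_+$: for $Re(s)=x>0$, H\"older's inequality gives $\int_0^\infty e^{-xt}|\phi(t)|\,dt\le\|e^{-x\cdot}\|_{L^{p'}}\|\phi\|_{L^p}<\infty$, and differentiation under the integral sign yields analyticity. Since $e^{-pxt}\le 1$ the function $g_x$ lies in $L^p(\mathbb{R})$ with $\|g_x\|_{L^p}^p=\int_0^\infty e^{-pxt}|\phi(t)|^p\,dt$. Applying the Hausdorff--Young inequality to $\widehat{g_x}$, whose $L^{p'}$ norm equals $\mu_{p'}(\Phi,x)$ up to the normalizing constant absorbed into $\mu_{p'}$, produces at once the pointwise estimate $\mu_{p'}(\Phi,x)\le C(p)\big(\int_0^\infty e^{-pxt}|\phi(t)|^p\,dt\big)^{1/p}$. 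The right-hand side is bounded above by $C(p)\|\phi\|_{L^p}$ uniformly in $x>0$, so $\sup_{x>0}\mu_{p'}(\Phi,x)<\infty$ and $\Phi\in H^{p'}(\mathbb{C}_+)$.

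For the converse (ii), I would reverse the construction. For each $\sigma>0$ the slice $\Phi(\sigma+i\cdot)$ belongs to $L^p(\mathbb{R})$ with norm controlled by $\|\Phi\|_{H^p(\mathbb{C}_+)}$, so by Hausdorff--Young its inverse Fourier transform
\[
\psi_\sigma(t)=\frac{1}{2\pi}\int_{-\infty}^\infty e^{i\eta t}\Phi(\sigma+i\eta)\,d\eta
\]
exists as an $L^{p'}(\mathbb{R})$ limit and satisfies $\|\psi_\sigma\|_{L^{p'}}\le K(p)\,\mu_p(\Phi,\sigma)$. I would then set $\phi(t)=e^{\sigma t}\psi_\sigma(t)$ and prove three things. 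First, $\phi$ is independent of $\sigma$: for $\sigma_1<\sigma_2$ one shifts the vertical contour using the analyticity of $\Phi$ and Cauchy's theorem, the horizontal pieces dropping out because $\Phi(x+iy)\to 0$ as $|y|\to\infty$ for $H^p$ functions. Second, $\phi$ is supported in $[0,\infty)$: the uniform bound translates into $\int_{-\infty}^\infty e^{-p'\sigma t}|\phi(t)|^{p'}\,dt\le(K(p)\|\Phi\|_{H^p(\mathbb{C}_+)})^{p'}$ for every $\sigma>0$, and since $e^{-p'\sigma t}\to+\infty$ as $\sigma\to\infty$ for each $t<0$, Fatou's lemma forces $\phi=0$ almost everywhere on $(-\infty,0)$. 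Third, once $\phi$ vanishes for $t<0$, the relation $\widehat{e^{-\sigma\cdot}\phi}=\Phi(\sigma+i\cdot)$ is precisely $\mathcal{L}(\phi)=\Phi$ on each line $Re(s)=\sigma$, hence on all of $\mathbb{C}_+$ by analyticity, while the displayed inequality of (ii) is just the case $\sigma=x$ of the uniform bound on $\psi_\sigma$.

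The step I expect to be the genuine obstacle is establishing the $\sigma$-independence of $\phi$, and with it the support statement, because when $1<p<2$ the inversion integral defining $\psi_\sigma$ converges only in the $L^{p'}$ sense and neither absolutely nor pointwise. The contour-shifting argument must therefore be justified through truncation of the $\eta$-integral, a careful passage to the limit in $L^{p'}$, and the decay of $\Phi$ along horizontal segments; for $p=2$ the whole matter collapses to Plancherel and is immediate, so it is exactly the range $1<p<2$ that demands care. Once the $\sigma$-independence is secured, the Fatou argument for the support and the identification of the Laplace transform are routine.
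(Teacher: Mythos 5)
The paper does not prove this theorem at all: it is quoted as a classical result with the citation to Doetsch's 1937 paper (reference \cite{Doetsch2}, pp.\ 276 and 279), so there is no in-paper argument to compare against. Your reconstruction via the identity $\Phi(x+iy)=\widehat{g_x}(y)$ with $g_x(t)=e^{-xt}\phi(t)$ and the Hausdorff--Young inequality is exactly the classical route to this result, and part (i) as you present it is complete and correct: $\|g_x\|_{L^p}^p=\int_0^\infty e^{-pxt}|\phi|^p\,dt\le\|\phi\|_{L^p}^p$ gives both the displayed estimate and the uniform bound defining membership in $H^{p'}(\mathbb{C}_+)$.

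For part (ii) your outline is the right one, and you correctly identify the contour-shift/$\sigma$-independence step as the delicate point (it needs the fact that an $H^p(\mathbb{C}_+)$ function tends to $0$ as $|{\rm Im}\,s|\to\infty$ uniformly on strips $0<\sigma_1\le{\rm Re}\,s\le\sigma_2$, which follows from subharmonicity of $|\Phi|^p$ and the uniform bound on $\mu_p(\Phi,x)$, plus an a.e.\ subsequence extraction since the vertical integrals converge only in $L^{p'}$). The Fatou argument for the support is correct. The one step you label ``routine'' that is not quite routine is the final identification $\mathcal{L}(\phi)=\Phi$: for $1<p<2$ one has $p'>2$, so the Fourier transform is \emph{not} defined on $L^{p'}$ by Hausdorff--Young, and you cannot simply apply $\mathcal{F}$ to $\psi_\sigma=\mathcal{F}^{-1}[\Phi(\sigma+i\cdot)]$ to recover $\Phi(\sigma+i\cdot)$. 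The standard repair is to observe that $e^{-\sigma' t}\phi(t)=e^{-(\sigma'-\sigma)t}\psi_\sigma(t)\in L^1(\mathbb{R}_+)$ for every $\sigma'>\sigma$ by H\"older, so $\mathcal{L}(\phi)(s)$ converges absolutely and is analytic on ${\rm Re}\,s>0$, and then to verify $\mathcal{L}(\phi)(\sigma'+i\eta)=\Phi(\sigma'+i\eta)$ via the multiplication formula $\int\widehat{f}\,g=\int f\,\widehat{g}$ (or the Cauchy representation of $H^p$ functions) rather than by direct Fourier inversion on $L^{p'}$. With that step filled in, the argument is a correct and complete proof of the theorem the paper imports from Doetsch.
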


We remark that the restrictions on $p$ appearing in both parts of the theorem imply that neither 
{\it (i)} is the converse of {\it (ii)} nor {\it (ii)} is the converse of {\it (i)}, except in the 
case $p=p'=2$.  This fact reflects itself in the enunciates of
our main theorems (Theorems \ref{main_thm} and \ref{ivp2}). 
If $p=p'=2$, Doetsch's Representation theorem is the important
Paley-Wiener theorem, which states that the Laplace transform is a
unitary isomorphism between $L^2(\mathbb{R}_+)$ and
$H^2(\mathbb{C}_+)$; a precise statement is in Section 5, after \cite{Har,KoYo}. 
We also note that there exist generalizations of Theorem \ref{twDoe}: it has been observed that 
the Laplace transform determine correspondences between appropriate
weighted Lebesgue and Hardy spaces, see \cite{BeHeJo,Rooney} and references therein, and
there also exists a representation theorem for functions $\Phi$ which
decay to zero in any closed half-plane $Re(s)\geq \delta
>a,\; a\in \mathbb{R}$ and satisfy the additional requirement that
$\Phi(\sigma +i (\cdot))\in L^1(\mathbb{R})$ (roughly speaking, an 
``$L^1$-case" of Theorem \ref{twDoe}, see \cite{Chu,Doetsch}). Interestingly, as we explain in
Section 3, in this ``$L^1$-situation" we can formulate theorems on existence and uniqueness of
solutions for nonlocal linear equations but regretfully, we cannot develop a
meaningful theory of initial value problems. 

The Doetsch Representation theorem is our main tool for the understanding
of nonlocal equations.

\subsection{The operator $f(\partial_t) : L^p(\mathbb{R}_+) \longrightarrow H^q(\mathbb{C}_+)$}

As a motivation, we calculate $f(\partial_t) \phi$ formally: we take a function $f$ analytic around zero, and suppose that
$\phi \in L^{p'}(0,\infty)$ is smooth. Let us write
\[
 f(\partial_t) \phi = \sum_{n=0}^\infty \frac{f^{(n)}(0)}{n!}\, \partial_t^n \phi \; .
\]
We note that this expansion is actually a rigorous result if $f$ extends
to an entire function and $\phi$ is a $f$-analytic vector for $\partial_t$ on the space of square integrable 
functions $L^2(\mathbb{R}_+)$ (see Section 5 below); also, this expansion is a formal 
theorem if $f$ is an entire function and $\phi$ is an entire function of exponential type (see \cite{CPR}; 
a generalization of this result appears in the companion paper \cite{CPR_Borel}). 
Standard properties of
the Laplace transform (see Proposition \ref{proplaplace} or \cite{Doetsch}) yield, formally,
\begin{eqnarray}
\mathcal{L} ( f(\partial_t)\phi )(s) & = &
\sum_{n=0}^\infty \frac{f^{(n)}(0)}{n!}\, \mathcal{L}(\partial_t^n\phi) \nonumber \\
& = & \sum_{n=0}^\infty \frac{f^{(n)}(0)}{n!}\, (s^n \mathcal{L}(\phi) - s^{n-1}\phi(0) -
                             s^{n-2}\phi'(0) - \dots - \phi^{(n-1)}(0) ) \nonumber \\
& = & \sum_{n=0}^\infty \frac{f^{(n)}(0)}{n!}\, \left(s^n \mathcal{L}(\phi) - \sum_{j=1}^n
                              s^{n-j} \phi^{(j-1)}(0) \right) \nonumber \\
& = & f(s) \mathcal{L}(\phi)(s) - \sum_{n=1}^\infty \sum_{j=1}^n \frac{f^{(n)}(0)}{n!}\,
                              s^{n-j} \phi^{(j-1)}(0) \; . \label{aux1}
\end{eqnarray}
If we define the formal series
\begin{equation}  \label{in_con111}
r(s) =  \sum_{n=1}^\infty \sum_{j=1}^n \frac{f^{(n)}(0)}{n!}\,
d_{j-1} \, s^{n-j} \; ,
\end{equation}
in which $d = \{ d_{j} : j \geq 0 \}$ is a sequence of complex numbers, then we can write (\ref{aux1}) as
\begin{equation} \label{xxx}
 \mathcal{L} ( f(\partial_t)\phi )(s) = f(s) \mathcal{L}(\phi)(s) - r(s) \; ,
\end{equation}
where $r(s)$ is given by (\ref{in_con111}) with $d_j = \phi^{(j)}(0)$.
%We plan to show that (\ref{xxx}) can be transformed into a rigorous
%theorem.

The following lemma asserts that for some choice of the sequence
$\{d_{j-1}\}_{j\geq 1}$, the formal series defined in (\ref{in_con111}) is in fact an analytic function.
\begin{lem}\label{r_conv}
Let $R_1>1$ be the maximum radius of convergence of the Taylor series 
$f_T(s) = \sum_{n=0}^\infty \frac{f^{(n)}(0)}{n!}\, s^n \; .$ Set $0<R<1$ and suppose that the series
\begin{equation}\label{eq_NB}
 \sum_{j=1}^{\infty}d_{j-1} \dfrac{1}{s^j} \;
\end{equation}
is uniformly convergent on compact sets for $|s|>R$. Then the series
$(\ref{in_con111})$ is analytic on the disk $|s|<R_1$.
\end{lem}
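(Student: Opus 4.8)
The plan is to show that the double series (\ref{in_con111}) converges absolutely and uniformly on every compact subset of the disk $\{|s| < R_1\}$; since each inner sum $p_n(s) = \frac{f^{(n)}(0)}{n!}\sum_{j=1}^n d_{j-1}\, s^{n-j}$ is a polynomial, hence entire, locally uniform convergence of $\sum_n p_n$ forces the limit $r$ to be analytic on $\{|s|<R_1\}$ by the Weierstrass convergence theorem. So everything reduces to a single estimate on the sum of absolute values. Write $a_n = f^{(n)}(0)/n!$. The two hypotheses translate into coefficient bounds: because $R_1$ is the radius of convergence of $\sum_n a_n s^n$, for each $\rho$ with $0 < \rho < R_1$ there is a constant $C = C(\rho)$ such that $|a_n| \le C\,\rho^{-n}$ for all $n$; and because $\sum_{j\ge 1} d_{j-1} s^{-j}$ converges for $|s| > R$, the substitution $w = 1/s$ shows that $\sum_{j\ge 1} d_{j-1} w^j$ has radius of convergence at least $1/R$, so that $\sum_{j\ge 1}|d_{j-1}|\,\rho^{-j} < \infty$ for every $\rho > R$.

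Next I would estimate $\sum_{n\ge 1}\sum_{j=1}^n |a_n|\,|d_{j-1}|\,|s|^{n-j}$. Here lies the only real subtlety: factoring the inner sum as $s^n\sum_{j=1}^n d_{j-1}s^{-j}$ tempts one to use the convergent tail $\sum_j d_{j-1}s^{-j}$ directly, but that series converges only for $|s| > R$, whereas we must also control small $|s|$, where $s^{-j}$ blows up. The resolution is to trade this growth against the decay of the $a_n$. Fixing a compact $K \subset \{|s|<R_1\}$ and setting $x_0 = \max_{s\in K}|s| < R_1$, I would interchange the order of summation (the index region $1 \le j \le n$ becomes $j \ge 1$, $n \ge j$) to obtain, after writing $m = n-j$,
\[
\sum_{j\ge 1}|d_{j-1}|\sum_{m\ge 0}|a_{m+j}|\,x_0^{m}\; .
\]
Choosing $\rho$ with $\max(x_0,R) < \rho < R_1$ --- an interval that is nonempty precisely because $R < 1 < R_1$ gives $R < R_1$ --- and inserting $|a_{m+j}| \le C\,\rho^{-(m+j)}$, the inner sum is dominated by a geometric series, namely $\sum_{m\ge0}|a_{m+j}|\,x_0^m \le C\,\rho^{-j}(1 - x_0/\rho)^{-1}$.

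Substituting this bound leaves $\tfrac{C}{1-x_0/\rho}\sum_{j\ge1}|d_{j-1}|\,\rho^{-j}$, which is finite by the second coefficient estimate since $\rho > R$. Thus the sum of absolute values is finite, and in particular $\sum_{n\ge1}\sup_{s\in K}|p_n(s)| \le \sum_{n\ge1}|a_n|\sum_{j=1}^n|d_{j-1}|\,x_0^{n-j} < \infty$. The Weierstrass $M$-test then gives uniform convergence of $\sum_n p_n$ on $K$, and since $K$ was an arbitrary compact subset of $\{|s|<R_1\}$, the series (\ref{in_con111}) converges locally uniformly there and defines an analytic function, as claimed. The main obstacle is exactly the interplay of the two radii just described; once one sees that the exponential decay of the $a_n$ coming from $R_1 > 1$ absorbs the $s^{-j}$ growth for every $|s| < R_1$, the remaining steps are routine.
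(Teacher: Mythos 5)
Your proof is correct, but it takes a genuinely different route from the paper's. The paper splits the disk into two regions: on the annulus $R<|s|<R_1$ it plays the locally bounded sums $\sum_{j}|d_{j-1}s^{-j}|$ against the tails of the Taylor series of $f$, while on the closed disk $|s|\le R$ it rewrites the inner polynomials $P_n$ in terms of partial sums and invokes a Dirichlet-type uniform convergence test (Theorem 5.1.8 of \cite{Hi}), using the convergence of $\sum_j d_{j-1}$ at the point $s=1$ --- which is exactly where the normalization $R<1<R_1$ enters. You instead establish absolute convergence of the double series in one stroke: the root-test bounds $|a_n|\le C\rho^{-n}$, the reindexing $m=n-j$ after Tonelli's interchange, and the geometric domination reduce everything to the finiteness of $\sum_j |d_{j-1}|\rho^{-j}$ for some $\rho$ with $\max(x_0,R)<\rho<R_1$, which follows from the hypothesis on $(\ref{eq_NB})$. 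This is more elementary (only nonnegative-series interchange and the Weierstrass $M$-test), avoids the case split entirely, gives the slightly stronger conclusion of absolute locally uniform convergence, and in fact uses only $R<R_1$ rather than $R<1<R_1$, so it proves a marginally more general statement; the paper's argument makes the special role of the point $s=1$ visible but at the cost of importing the auxiliary convergence test.
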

\begin{proof}
Let us write the series (\ref{in_con111}) in the following form
\[
 r(s) =  \sum_{n=1}^\infty \frac{f^{(n)}(0)}{n!} s^n\sum_{j=1}^n \,
d_{j-1} \, s^{-j} \; ,
\]
and define $r_N(s)$ as the partial sum
$$r_N(s):= \sum_{n=1}^N \frac{f^{(n)}(0)}{n!} s^n\sum_{j=1}^n \,
d_{j-1} \, s^{-j} \; .$$
Let $K\subset \{s: R<|s|<R_1\}$ be a compact set; there exists a positive constant $L_K$ 
such that for any $s\in K$ we have:
$$\, \sum_{j=1}^{\infty} \left| \frac{
d_{j-1}} { s^{j}} \right| \leq L_K \; ;$$
also, given $\epsilon>0$, there exist $N_0=N_0(\epsilon)$ such that
$$\sum_{n=N_0+1}^{\infty} \left|\frac{f^{(n)}(0)}{n!} s^n\right| <\epsilon\; ; $$

\noindent from these inequalities we deduce that
$$|r(s)-r_N(s)|=\left|\sum_{n=N+1}^{\infty} \frac{f^{(n)}(0)}{n!} s^n \sum_{j=1}^{n}  \frac{
d_{j-1}} { s^{j}} \right| < \epsilon L_K\; ,$$
for any $N\geq N_0$ and uniformly for  $s \in K$. Therefore, the partial sums $r_N(s)$ 
converge uniformly to $r(s)$ on $K$; 
since the compact subset $K$ is arbitrary, we have that $r(s)$ is analytic in $\{s: R<|s|<R_1\}$. 
Now we prove that $r(s)$ is analytic in $|s|\leq R$. We write $r(s)$ as
\[
 r(s) =  \sum_{n=1}^\infty \frac{f^{(n)}(0)}{n!}P_n(s) \; ,
\]
in which,
\[
P_n(s):=\sum_{j=1}^n \,
d_{j-1} \, s^{n-j}\, =d_0S_{n-1} + d_1S_{n-2}+\cdots d_{n-1}S_{0}\; ,
\]
where $S_n=\sum_{k=1}^{n}s^k$. Then, the convergence of the series $\sum_{j=1}^{\infty}d_{j-1}$
and the inequalities $R<1<R_1$ imply that the following three assertions hold:
\begin{itemize}
\item $\displaystyle
\sum_{n=1}^\infty \left| \frac{f^{(n)}(0)}{n!}-\frac{f^{(n+1)}(0)}{(n+1)!} \right|  < \infty\,$,
\item
$\lim_{n\to \infty} \frac{f^{(n)}(0)}{n!} =0\,$,
\item
$P_n(s)$ is uniformly bounded on $|s|\leq R\,$. 
\end{itemize}
The result now follows from \cite[Theorem 5.1.8]{Hi}.
\end{proof}

As a corollary we recover a previously known lemma, see \cite[Lemma 2.1]{GPR_JMP}:

\begin{cor}
Let $f$ be an entire function. Set $R<1$ and suppose that the series
\[
 \sum_{j=1}^{\infty}d_{j-1} \dfrac{1}{s^j} \; ,
\]
is convergent for $|s|>R$. Then the series $(\ref{in_con111})$ is an entire function.
\end{cor}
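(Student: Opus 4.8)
The plan is to deduce this Corollary directly from Lemma~\ref{r_conv}, by checking that the two respects in which its hypotheses are weaker than those of the Lemma are in fact harmless. Here $f$ is merely entire (rather than analytic with a finite radius of convergence $R_1$), and the series~(\ref{eq_NB}) is merely assumed convergent for $|s|>R$ (rather than uniformly convergent on compact sets). I would dispose of each weakening in turn and then invoke the Lemma.

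First I would observe that since $f$ is entire its Taylor series $f_T$ converges on all of $\mathbb{C}$, so the maximal radius of convergence appearing in Lemma~\ref{r_conv} is $R_1=\infty$; reading the Lemma's conclusion in this case gives analyticity of~(\ref{in_con111}) on the disk $|s|<\infty$, i.e.\ on the whole plane, so that $r(s)$ is entire. Since the Lemma is literally stated for finite $R_1>1$, to be safe I would note that its proof goes through verbatim when $R_1=\infty$: the estimate on the annulus $R<|s|<R_1$ now holds on each finite annulus $R<|s|\le M$ (the tail $\sum_{n>N_0}|f^{(n)}(0)s^n/n!|$ is uniformly small on every compact set precisely because $f_T$ converges everywhere), while the argument on $|s|\le R$ is unchanged; patching the two regions yields analyticity on all of $\mathbb{C}$.

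It then remains to upgrade the convergence hypothesis, and here I would use that~(\ref{eq_NB}) is a power series in the variable $w=1/s$: the assumption that it converges for every $|s|>R$ says exactly that it converges at every $w$ with $|w|<1/R$, so by the elementary theory of power series it converges absolutely and uniformly on each compact subset of $\{|w|<1/R\}$. Transporting this back through $w=1/s$ (a compact subset of $\{|s|>R\}$ lies in some annulus $R<|s|\le M$, corresponding to a compact annulus $1/M\le|w|<1/R$) gives precisely the uniform convergence on compacta required by Lemma~\ref{r_conv}. I would also record that, because $R<1$, the point $s=1$ satisfies $|s|>R$, so evaluating~(\ref{eq_NB}) there shows that $\sum_{j\ge 1}d_{j-1}$ converges, which is the ingredient invoked in the $|s|\le R$ part of the Lemma's proof. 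With both hypotheses thus verified, Lemma~\ref{r_conv} applies and yields the claim. I do not foresee a genuine obstacle; the only point that demands a little care is the passage to $R_1=\infty$, which is why I prefer to argue on each finite annulus rather than to substitute $R_1=\infty$ formally into the statement of the Lemma.
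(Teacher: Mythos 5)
Your proposal is correct and follows exactly the route the paper intends: the corollary is stated as an immediate specialization of Lemma~\ref{r_conv}, and you supply precisely the two verifications (that $R_1=\infty$ for entire $f$, handled on finite annuli, and that pointwise convergence of the power series in $w=1/s$ on $|w|<1/R$ upgrades to uniform convergence on compacta) needed to invoke it. The paper offers no further argument, so your more careful write-up is, if anything, a completion of the same approach rather than a departure from it.
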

\begin{rem}
There exists a large class of series satisfying conditions of Lemma
$\ref{r_conv}$. Indeed, let $r>0$ and denote by $Exp_{r}(\mathbb{C})$ the space
of entire functions of exponential type $\tau <r$ (see {\rm \cite{CPR,U}} and references therein). 
It is well know that if $\phi \in Exp_{r}(\mathbb{C})$ and it is of exponential type $\tau<r$, 
then its Borel transform 
$$\mathcal{B}(\phi)(s)= \sum_{j=0}^{\infty} \dfrac{\phi^{j}(0)}{s^{j+1}}$$
converges uniformly on $|s|>\tau$ (see again {\rm \cite{CPR,U}}). In our case, if 
$\phi \in Exp_{1}(\mathbb{C})$, 
its Borel transform $\mathcal{B}(\phi)(s)$ is precisely $(\ref{eq_NB})$ for $d_{j}=\phi^{j}(0)$. 
We will come back to 
the class of functions of exponential type in our forthcoming paper {\rm \cite{CPR_Borel}}.
\end{rem}

Motivated by the previous computations and Doetsch's representation theorem, we make the 
following definition, generalizing \cite{GPR_JMP,GPR_CQG}:

\begin{defi} \label{def0}
Let $f$ be an analytic function on a region which contains the half-plane
$\{ s \in \mathbb{C} : Re(s) > 0 \}$, and let $\mathcal{H}$ be the space of all 
$\mathbb{C}$-valued functions 
on $\mathbb{C}$ which are analytic on (regions of) $\mathbb C$. We fix $p$ and $p'$ such that 
$1<p\leq 2$ and 
$\frac{1}{p}+\frac{1}{p'}=1$, and we consider the subspace $D_f$ of
$L^{p'}(0,\infty) \times \mathcal{H}$ consisting of all the pairs $(\phi , r)$ such that
\begin{equation} \label{tr0}
\widehat{(\phi , r)} = f\,{\mathcal L}(\phi) - r
\end{equation}
belongs to the class $H^{p}(\mathbb{C}_+)\,$.
The domain of $f(\partial_t)$ as a linear operator from the product $L^{p'}(0,\infty) \times \mathcal{H}$
into $L^{p'}(0,\infty)$ is $D_f$. If $(\phi , r) \in D_f$ then we define
\begin{equation} \label{tr1}
f(\partial_t)\,(\phi , r) = {\mathcal L}^{- 1} (\,\,\widehat{(\phi,r)} \,\,) =
{\mathcal L}^{- 1} ( f\,{\mathcal L}(\phi)
- r ) \; .
\end{equation}
\end{defi}

\smallskip

We note that this definition will be further generalized in \cite{CPR_Borel} by means of the 
Borel transform, 
in order to capture an even larger class of symbols $f$. However, as we will see below, the 
foregoing interpretation 
for $f(\partial_t)$ already gives us a satisfactory way to deal with the initial value problem 
for linear nonlocal equations in many interesting cases.

\subsection{Linear nonlocal equations}
In this subsection we solve the nonlocal equation
\begin{equation} \label{lin_gen_0}
f(\partial_t)(\phi , r) = J \; , \quad \quad J \in L^{p'}(0,\infty)\; ,
\end{equation}
in which we are using the above interpretation of $f(\partial_t)$. We assume
hereafter that a suitable function $r \in \mathcal{H}$ has been fixed; consequently,
we understand Equation (\ref{lin_gen_0}) as an
equation for $\phi \in L^{p'}(0,\infty)$ such that $(\phi , r) \in D_f$.
We simply write $f(\partial_t)\phi = J$ instead of (\ref{lin_gen_0}). First of all, 
we formalize what we mean by a solution:

\begin{defi} \label{main_def}
 Let us fix a function $r \in \mathcal{H}$. We say that
 $\phi \in L^{p'}(0,\infty)$
is a solution to the equation $f(\partial_t)\phi = J$ if and only if
\begin{enumerate}
 \item $\widehat{\phi} = f\,\mathcal{L}(\phi) - r \in H^p(\mathbb{C}_+)\,$;
 $($ i.e., $(\phi,r) \in D_f$ $)$;
 \item $f(\partial_t)(\phi) = \mathcal{L}^{-1} (\widehat{(\phi,r)}) =
                             \mathcal{L}^{-1}(f\,\mathcal{L}(\phi) - r) = J\,$.
\end{enumerate}
\end{defi}

Our main theorem on existence and uniqueness of the solution to the
linear problem (\ref{lin_gen_0}) is the following abstract result:

\begin{tw} \label{main_thm}
Let us fix a function $f$ which is analytic in a region $D$
which contains the half-plane $\{s \in \mathbb{C} : Re(s) > 0\}$. We also fix $p$ 
and $p'$ such that $1<p\leq 2$ and $\frac{1}{p}+\frac{1}{p'}=1$, and we consider a
function $J \in L^{p'}(\mathbb{R}_+)$ such that 
$\mathcal{L}(J) \in H^p(\mathbb{C}_+)$.
We assume that the function $({\cal L}( J ) + r)/f$ is in the space 
$H^p(\mathbb{C}_+)$. Then, the linear equation
\begin{equation} \label{lin_gen}
f(\partial_t)\phi = J
\end{equation}
can be uniquely solved on $L^{p'}(0,\infty)$.
Moreover, the solution is given by the explicit formula
\begin{equation} \label{sol_lor}
\phi = {\cal L}^{-1} \left( \frac{{\cal L}( J ) + r}{f} \, \right)
\; .
\end{equation}
\end{tw}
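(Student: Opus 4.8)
The plan is to reduce the nonlocal equation to a purely algebraic identity between Laplace transforms and then invoke Doetsch's Representation theorem (Theorem \ref{twDoe}) to pass back and forth between $L^{p'}(0,\infty)$ and $H^p(\mathbb{C}_+)$. First I would unwind Definition \ref{main_def}: a function $\phi \in L^{p'}(0,\infty)$ solves $f(\partial_t)\phi = J$ precisely when $\widehat{\phi} = f\,\mathcal{L}(\phi) - r$ lies in $H^p(\mathbb{C}_+)$ and $\mathcal{L}^{-1}(\widehat{\phi}) = J$. Since $\widehat\phi \in H^p(\mathbb{C}_+)$, part (ii) of Theorem \ref{twDoe} guarantees that $\mathcal{L}$ and $\mathcal{L}^{-1}$ are mutual inverses on this pair of spaces, so applying $\mathcal{L}$ to the identity $\mathcal{L}^{-1}(\widehat\phi) = J$ turns the solution condition into the algebraic equation
\begin{equation*}
f\,\mathcal{L}(\phi) - r = \mathcal{L}(J) \quad \text{on } \mathbb{C}_+\,,
\end{equation*}
equivalently $\mathcal{L}(\phi) = (\mathcal{L}(J) + r)/f$.

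For existence, I would set $\Phi := (\mathcal{L}(J) + r)/f$, which lies in $H^p(\mathbb{C}_+)$ by hypothesis, and define $\phi := \mathcal{L}^{-1}(\Phi)$; part (ii) of Theorem \ref{twDoe} then places $\phi$ in $L^{p'}(0,\infty)$ and supplies the explicit inversion formula, giving (\ref{sol_lor}). It remains to check that this $\phi$ meets both requirements of Definition \ref{main_def}. A direct substitution gives $\widehat{\phi} = f\,\mathcal{L}(\phi) - r = f\,\Phi - r = \mathcal{L}(J)$, which belongs to $H^p(\mathbb{C}_+)$ by assumption; applying $\mathcal{L}^{-1}$ and using $\mathcal{L}^{-1}(\mathcal{L}(J)) = J$ (again from Theorem \ref{twDoe}(ii), since $\mathcal{L}(J) \in H^p(\mathbb{C}_+)$) shows $f(\partial_t)\phi = J$.

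For uniqueness, suppose $\phi_1$ and $\phi_2$ are both solutions. By the reduction above, each satisfies $f\,\mathcal{L}(\phi_i) = \mathcal{L}(J) + r$ as analytic functions on $\mathbb{C}_+$, so $f\,(\mathcal{L}(\phi_1) - \mathcal{L}(\phi_2)) \equiv 0$ there. Because $f$ is analytic and not identically zero, its zeros in $\mathbb{C}_+$ are isolated, and since $\mathcal{L}(\phi_1) - \mathcal{L}(\phi_2)$ is analytic it must vanish on the dense complement of those zeros, hence everywhere on $\mathbb{C}_+$ by continuity. The injectivity of the Laplace transform on $L^{p'}(0,\infty)$, encoded in the inversion formula of Theorem \ref{twDoe}(ii), then forces $\phi_1 = \phi_2$.

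The genuinely delicate point is the legitimacy of treating $\mathcal{L}$ and $\mathcal{L}^{-1}$ as honest mutual inverses between $L^{p'}(0,\infty)$ and $H^p(\mathbb{C}_+)$; everything hinges on staying within the range $1 < p \le 2$ where Theorem \ref{twDoe}(ii) applies, so that the convergence of the inversion integral and the memberships $\mathcal{L}(J), \Phi \in H^p(\mathbb{C}_+)$ are exactly the hypotheses secured in the statement. The division by $f$ poses no separate obstacle for existence, since the requirement that $(\mathcal{L}(J)+r)/f$ be analytic and lie in $H^p(\mathbb{C}_+)$, in particular that the numerator absorb any zeros of $f$ in $\mathbb{C}_+$, is already built into the assumptions; for uniqueness it is dispatched by the analytic-continuation argument above.
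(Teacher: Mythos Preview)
Your proof is correct and follows essentially the same approach as the paper's: both define $\phi$ via formula (\ref{sol_lor}), verify that $\widehat{\phi} = \mathcal{L}(J) \in H^p(\mathbb{C}_+)$ so that $(\phi,r) \in D_f$ and the equation holds, and then deduce uniqueness from the analyticity of $f$ and $\mathcal{L}(\phi_1)-\mathcal{L}(\phi_2)$ on $\mathbb{C}_+$. Your uniqueness argument (zeros of $f$ are isolated, hence the difference vanishes on a dense set and therefore everywhere) is just the contrapositive of the paper's version (if the difference were nonzero somewhere, $f$ would vanish on an open set and hence identically), so the two are logically identical.
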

\begin{proof}
We set $\phi = {\cal L}^{-1} \left(({\cal L}( J ) + r)/f\,\right)\,$.
Since $\mathcal{L}(J) \in H^p(\mathbb{C}_+)$, it follows that the pair $(\phi , r)$
is in the domain $D_f$ of the operator $f(\partial_t)$: indeed, an easy calculation using 
Theorem \ref{twDoe} shows that $\widehat{\phi}={\cal L}( J )$, which is an element 
of $H^p(\mathbb{C}_+)$ by hypothesis.  We can then check (using Theorem \ref{twDoe}
again) that $\phi$ defined by (\ref{sol_lor}) is a solution of (\ref{lin_gen}).

We prove uniqueness using Definition \ref{main_def}:
let us assume that $\phi$ and $\psi$ are solutions to Equation
$(\ref{lin_gen})$. Then, item $2$ of Definition $\ref{main_def}$ implies
$f\,\mathcal{L}(\phi - \psi) =0$ on $\{s \in \mathbb{C} : Re(s) > a\}$. Set 
$h = \mathcal{L}(\phi - \psi)$ 
and suppose that $h(s_0) \neq 0$ for $s_0$ in the half-plane just defined. By
analyticity, $h(s) \neq 0$ in a suitable neighborhood $U$ of $s_0$.
But then $f=0$ in $U$, so that (again by
analyticity) $f$ is identically zero.
\end{proof}

\begin{rem}
Interestingly, the above theorem on solutions to Equation $(\ref{lin_gen})$ is 
a fully rigorous version of theorems stated long ago, see for instance the classical
papers {\rm \cite{Bou,Car}} by Bourlet and Carmichael, and also the recent work 
{\rm \cite{BK1}} in which further references appear. We recover a theorem proven by 
Carmichael in {\rm \cite{Car}} in Corollary $\ref{ic1}$ below. 
\end{rem}

In Section 3 we impose further conditions on $J$ and $f$
which assure us that (\ref{sol_lor}) is smooth at $t = 0$, and we
use these conditions to study the initial value problem for
(\ref{lin_gen}).

We recall that we have fixed an analytic function $r$. Hereafter we assume the
natural decay condition
\begin{equation}
\Big|\frac{r(s)}{f(s)}\Big| \leq \displaystyle\frac{C}{|s|^q}
\label{growth0}
\end{equation}
for $|s|$ sufficiently large and some real number $q > 0$ (see for instance
\cite{Doetsch}), and we examine
three special cases of Theorem \ref{main_thm}: (a) the function
$r/f$ has no poles; (b) the function $r/f$ has a finite number of
poles; (c) the function $r/f$ has an infinite number of poles.

\smallskip

\begin{cor} \label{ic}
Assume that the hypotheses of Theorem $\ref{main_thm}$ hold, that
${\cal L}( J )/f$ is in $H^p(\mathbb{C}_+)$, and that
$r/f$ is an entire function such that $(\ref{growth0})$ holds.
Then, solution $(\ref{sol_lor})$ to Equation $(\ref{lin_gen})$ is
simply $\phi = {\cal L}^{-1} \left( \frac{{\cal L}( J )}{f} \,
\right)$.
\end{cor}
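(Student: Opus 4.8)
The plan is to show that the term coming from $r$ contributes nothing, so that the general formula (\ref{sol_lor}) collapses to the stated one. First I would record that, under the present hypotheses, the quotient $r/f$ itself lies in $H^p(\mathbb{C}_+)$: the hypotheses of Theorem \ref{main_thm} give $(\mathcal{L}(J)+r)/f \in H^p(\mathbb{C}_+)$, the extra assumption gives $\mathcal{L}(J)/f \in H^p(\mathbb{C}_+)$, and since $H^p(\mathbb{C}_+)$ is a vector space, their difference $r/f$ belongs to it as well. Consequently $\mathcal{L}^{-1}(r/f)$ is a well-defined element of $L^{p'}(0,\infty)$ by part (ii) of Theorem \ref{twDoe}, and linearity of the inverse Laplace transform lets me split
\[
\phi = \mathcal{L}^{-1}\!\left(\frac{\mathcal{L}(J)+r}{f}\right) = \mathcal{L}^{-1}\!\left(\frac{\mathcal{L}(J)}{f}\right) + \mathcal{L}^{-1}\!\left(\frac{r}{f}\right).
\]
Everything then reduces to proving that $\mathcal{L}^{-1}(r/f) = 0$ in $L^{p'}(0,\infty)$.

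To prove this, I would invoke the inversion formula (Proposition \ref{proplaplace} or the formula in Theorem \ref{twDoe}(ii)) and represent $\mathcal{L}^{-1}(r/f)(t)$ as a Bromwich integral $\frac{1}{2\pi i}\int_{\sigma-i\infty}^{\sigma+i\infty} e^{st}\,(r/f)(s)\,ds$ along a vertical line $Re(s)=\sigma>0$. For fixed $t>0$ I would close the contour by a large semicircle in the left half-plane: on this arc the exponential factor $e^{st}$ decays (since $t>0$ and $Re(s)\to -\infty$), and the decay hypothesis (\ref{growth0}) controls $r/f$, so a Jordan-lemma estimate makes the arc contribution vanish as the radius tends to infinity. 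Because $r/f$ is entire, the closed contour encloses no poles, and Cauchy's theorem forces the integral to equal zero; hence $\mathcal{L}^{-1}(r/f)(t)=0$ for a.e. $t>0$. This is precisely the ``no poles'' instance of the residue picture that will govern cases (b) and (c): with no poles there are simply no residues to collect. Substituting back yields $\phi=\mathcal{L}^{-1}(\mathcal{L}(J)/f)$, as claimed.

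The step I expect to be the main obstacle is the rigorous justification of the contour manipulation, because the inverse transform in Theorem \ref{twDoe}(ii) is defined only as an $L^{p'}$-limit of truncated vertical integrals rather than as a classically absolutely convergent integral; one must therefore verify that shifting and closing the contour and discarding the arc are legitimate in this functional-analytic setting, for which the membership $r/f\in H^p(\mathbb{C}_+)$ and the uniform bound (\ref{growth0}) are the essential inputs. I note a clean shortcut that sidesteps this issue if (\ref{growth0}) is read as decay in every direction: an entire function satisfying $|r/f|\le C/|s|^q$ for all large $|s|$ is bounded and tends to $0$ at infinity, so Liouville's theorem forces $r/f\equiv 0$, and the conclusion is then immediate.
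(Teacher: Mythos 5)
Your proposal is correct, and its main line of argument is essentially the one the paper intends: the paper's own proof of Corollary \ref{ic} is only a pointer to Corollary 2.1 of \cite{GPR_CQG}, but the proof of the neighbouring Corollary \ref{ic1} makes the intended mechanism explicit, and it is exactly yours. There, the authors first observe (as you do) that $r/f \in H^p(\mathbb{C}_+)$ because it is the difference of $(\mathcal{L}(J)+r)/f$ and $\mathcal{L}(J)/f$, split the Bromwich integral into the $\mathcal{L}(J)/f$ part and the $r/f$ part, and evaluate the latter by closing the contour and collecting residues, citing \cite[Theorem 25.1 and Section 26]{Doetsch} to justify the contour manipulation under the decay condition (\ref{growth0}) --- which is precisely the technical point you flag as the main obstacle. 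In the present corollary the integrand is entire, the residue sum is empty, and the second term vanishes, as you say.

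Your Liouville shortcut is a genuinely different and arguably sharper observation: if (\ref{growth0}) is a bound valid for all large $|s|$ (the paper states it without restricting to a sector or half-plane), then an entire $r/f$ obeying it is bounded on $\mathbb{C}$, hence constant, hence identically zero. This not only gives the conclusion with no contour analysis at all, it explains \emph{why} the corollary ``leaves no room for an initial value problem,'' as the paper later remarks: the hypotheses force $r/f \equiv 0$, so the generalized initial condition carries no information. The only caveat is the one you already state --- the argument needs (\ref{growth0}) in every direction, not merely on vertical lines in $\mathbb{C}_+$; under the weaker reading one must fall back on the contour argument, which still goes through.
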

\begin{proof}
%The growth condition (\ref{growth0}) implies that there exists $M
%> 0$ such that $\displaystyle \Big|\frac{r(s)}{f(s)}\Big|$ is bounded for $|s|>
%M$. On the other hand, the function $\displaystyle
% \Big|\frac{r(s)}{f(s)}\Big|$ is bounded for $|s| \leq M$ simply by
%continuity. Thus, $r(s)/f(s)$ is an entire function with bounded
%module, and therefore $r(s)/f(s) = C_0$ for some constant complex
%number. But then (\ref{growth0}) implies that $C_0 =0$, and the
%result follows from the general formula (\ref{sol_lor}).
The proof is analogous to the proof of Corollary 2.1 of \cite{GPR_CQG}.
\end{proof}

\smallskip

\begin{cor} \label{ic1}
Assume that  the hypotheses of\, Theorem $\ref{main_thm}$ hold,
and that $\displaystyle r/f$\, has a finite number of
poles $\omega_i$ ($i=1, \dots, N$) of order $\,r_i$ to the left of
$Re(s)=0$. Suppose also that ${\cal L}( J )/f$ is in
$H^p(\mathbb{C}_+)$, and that the growth condition $(\ref{growth0})$ holds.
Then, the solution $\phi \in L^{p'}(\mathbb{R}_+)$ given by $(\ref{sol_lor})$ 
can be represented in the form
\begin{equation} \label{car}
\phi(t) = \frac{1}{2 \pi i} \int_{\sigma - i \infty}^{\sigma + i\infty} e^{s\, t} \left( \frac{{\cal L}( J )}{f} \right)(s) \, ds +
\sum_{i=1}^N P_i(t)\,e^{\omega_i t} \; \; ,\; \quad \sigma>0 \; ,
\end{equation}
in which $P_i(t)$ are polynomials of degree $\,r_i - 1$.
\end{cor}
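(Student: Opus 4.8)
The plan is to start from the explicit solution furnished by Theorem~\ref{main_thm}, namely $\phi = \mathcal{L}^{-1}\big( (\mathcal{L}(J) + r)/f \big)$, and to exploit the linearity of the inverse Laplace transform to write
\begin{equation*}
\phi = \mathcal{L}^{-1}\left( \frac{\mathcal{L}(J)}{f} \right) + \mathcal{L}^{-1}\left( \frac{r}{f} \right) \; .
\end{equation*}
The hypothesis $\mathcal{L}(J)/f \in H^p(\mathbb{C}_+)$ guarantees, via Theorem~\ref{twDoe}, part (ii), that the first summand is a genuine element of $L^{p'}(\mathbb{R}_+)$ whose inverse transform is represented by the Bromwich integral $\frac{1}{2\pi i}\int_{\sigma - i\infty}^{\sigma + i\infty} e^{st}(\mathcal{L}(J)/f)(s)\,ds$ for any $\sigma > 0$; this is precisely the first term of (\ref{car}). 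Thus the whole problem reduces to evaluating $\mathcal{L}^{-1}(r/f)$ and showing that it equals $\sum_{i=1}^N P_i(t)e^{\omega_i t}$ with $\deg P_i = r_i - 1$.

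First I would write $\mathcal{L}^{-1}(r/f)$ as the Bromwich integral $\frac{1}{2\pi i}\int_{\sigma - i\infty}^{\sigma + i\infty} e^{st}(r/f)(s)\,ds$, again with $\sigma > 0$ so that the vertical contour lies to the right of every pole $\omega_i$. For $t > 0$ I would close this contour to the left with a large semicircular arc $\Gamma_\rho$ of radius $\rho$. Because $\mathrm{Re}(s) < 0$ on the arc makes $|e^{st}| = e^{t\,\mathrm{Re}(s)}$ small, and because the decay condition (\ref{growth0}) forces $|r(s)/f(s)| \leq C/|s|^q$ with $q > 0$, a standard Jordan-lemma estimate shows that the contribution of $\Gamma_\rho$ tends to $0$ as $\rho \to \infty$. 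The residue theorem then yields
\begin{equation*}
\mathcal{L}^{-1}\!\left( \frac{r}{f} \right)(t) = \sum_{i=1}^N \mathrm{Res}_{s = \omega_i} \left[ e^{st}\,\frac{r(s)}{f(s)} \right] \; ,
\end{equation*}
the sum running over the finitely many poles $\omega_i$, all of which lie to the left of $\mathrm{Re}(s) = 0$ and are therefore enclosed.

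It remains to identify each residue. Since $\omega_i$ is a pole of $r/f$ of order $r_i$, the residue of $e^{st} r(s)/f(s)$ at $\omega_i$ is
\begin{equation*}
\frac{1}{(r_i - 1)!}\, \lim_{s \to \omega_i} \frac{d^{\,r_i - 1}}{ds^{\,r_i - 1}} \left[ (s - \omega_i)^{r_i}\, e^{st}\, \frac{r(s)}{f(s)} \right] \; .
\end{equation*}
Writing the bracket as $e^{st}h(s)$ with $h(s) = (s - \omega_i)^{r_i}\,r(s)/f(s)$ analytic at $\omega_i$, and applying the Leibniz rule, each differentiation of $e^{st}$ produces a factor $t$, so the expression equals $P_i(t)\,e^{\omega_i t}$ with $P_i$ a polynomial in $t$; its top coefficient, that of $t^{r_i-1}$, is proportional to $h(\omega_i)$, the leading Laurent coefficient of $r/f$ at $\omega_i$, which is nonzero, so $\deg P_i = r_i - 1$ exactly. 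Collecting the two pieces gives (\ref{car}). The main obstacle is the rigorous justification of the contour-closing step: one must verify that the arc integral vanishes (which is where the growth hypothesis (\ref{growth0}) is essential) and reconcile the pointwise residue computation with the fact that, in the setting of Theorem~\ref{twDoe}, part (ii), the inversion formula is a priori only an $L^{p'}$-limit; since the output $\sum_i P_i(t)e^{\omega_i t}$ is smooth, this identification causes no genuine difficulty but should be stated with care.
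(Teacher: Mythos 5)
Your proposal is correct and follows essentially the same route as the paper: split $(\mathcal{L}(J)+r)/f$ by linearity, represent the $\mathcal{L}(J)/f$ piece by the Bromwich integral via Theorem \ref{twDoe}, and evaluate $\mathcal{L}^{-1}(r/f)$ by residues at the finitely many poles $\omega_i$, each residue contributing $P_i(t)e^{\omega_i t}$ with $\deg P_i = r_i - 1$. The only cosmetic differences are that the paper first records explicitly that $r/f = (\mathcal{L}(J)+r)/f - \mathcal{L}(J)/f \in H^p(\mathbb{C}_+)$ before splitting, delegates the contour-closing estimate to Doetsch's cited theorems rather than invoking Jordan's lemma directly, and extracts the residues from the Laurent expansion plus Cauchy's integral formula instead of the derivative formula with the Leibniz rule.
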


\begin{proof}
We first notice that the quotient $r/f\in H^p(\mathbb{C}_+)$, since 
$({\cal L}( J ) + r)/f$ and ${\cal L}( J )/f$ are in $H^p(\mathbb{C}_+)$, 
by assumption. Thus, 
the general solution (\ref{sol_lor}) can be computed via the inversion
formula for the Laplace transform appearing in Theorem \ref{twDoe} (see also 
\cite{Doetsch}). We obtain the following equation in $L^{p'}(\mathbb{R}_+)$:
\begin{equation}\label{inversion}
\phi(t) =
 \frac{1}{2 \pi i} \int_{\sigma - i \infty}^{\sigma + i\infty} e^{s\, t} \left( \frac{{\cal
L}( J )}{f} \right)(s)ds  + \frac{1}{2 \pi i} \int_{\sigma - i \infty}^{\sigma + i\infty} e^{s\, t} \left( \frac{r}{f} \right)(s) \, ds \; , \quad \sigma >0 \; .
\end{equation}
At this level of generality, all we can say about the first summand is that it 
belongs to $L^{p'}(\mathbb{R}_+)$. On the other hand, the growth condition on 
$r/f$ implies that we can evaluate the second summand via calculus of
residues, see \cite[Theorem 25.1 and Section 26]{Doetsch}.
The second integral in the right hand side of (\ref{inversion}) is 
$$
\frac{1}{2 \pi i} \int_{\sigma - i \infty}^{\sigma + i\infty} e^{s\,
t} \left( \frac{r}{f}\right)(s) \, ds=\sum_{i=1}^{N}res_{\,i}(t)\; ,
$$
in which $res_{\,i}(t)$ denotes the residue of $\displaystyle
\frac{r(s)}{f(s)}$ at $\omega_i\,$. In order to compute
$res_{\,i}(t)$ we use the Laurent
expansion of $r/f$ around the pole $\omega_i\,$, that is
\begin{equation} \label{pole-1}
\frac{r(s)}{f(s)}=\frac{a_{1,i}}{(s-\omega_i)}
+\frac{a_{2,i}}{(s-\omega_i)^{2}}+\cdots+\frac{a_{r_i,i}}{(s-\omega_i)^{r_{i}}} + h_i(s) \; ,
\end{equation}
where $h_i$ is an analytic function inside a closed curve around
$\omega_i\,$. We multiply (\ref{pole-1}) by $e^{ts}/2\pi i$ and use
Cauchy's integral formula. We obtain 
(cf. \cite[{\em loc. cit.}]{Doetsch})
\[
res_{\,i}(t) = P_i(t)\;e^{\omega_i t} \; ,
\]
where $P_i(t)$ is the polynomial of degree $r_i - 1$ given by
\begin{equation}
 P_i(t) = a_{1,i} + a_{2,i}\, \frac{t}{1!} + \cdots + a_{r_i,i}\, \frac{t^{r_i - 1}}{(r_i -1)!} \; ,
\end{equation}
and the result follows. We note that $res_i(t)$ is indeed in $L^{p'}(\mathbb{R}_+)$ 
since the poles $\omega_i$ lie to the left of $\mathbb{C}_+$.
\end{proof}

Formula (\ref{car}) for the solution $\phi(t)$ is
precisely Carmichael's formula appearing in \cite{Car}, as quoted
in \cite{BK1}. Explicit formulae such as
(\ref{car}) are crucial for the study of the initial value problem
we carry out in Section 3.

The last special case of Theorem \ref{main_thm} is when the
function $r(s)/f(s)$ has an infinite number of isolated poles
$\omega_i$ located to the left of $Re(s) = a$, and such that
$|\omega_0| \leq |\omega_1| \leq |\omega_2| \leq \cdots$.

\begin{cor} \label{ic2}
Assume that  the hypotheses of\,\, Theorem $\ref{main_thm}$ hold, that 
${\cal L}( J )/f$ 
is in $H^p(\mathbb{C}_+)$, that the quotient $r/f$\, has an infinite number of
poles $\omega_n$ of order $\,r_n$ to the left of
$Re(s)=0$, and that $|\omega_n| \leq |\omega_{n+1}|$ for\, $n \geq 1$. 
Suppose that there exist curves $\sigma_n$ in the half-plane
$Re(s) \leq 0$ satisfying the following:
\begin{itemize}
\item[~] The curves $\sigma_n$ connect points $+ i b_n$ and $- i b_n\,$, where
the numbers $b_n$ are such that the closed curve formed by $\sigma_n$ together with
the segment of the line $Re(s) = 0$ between the points $+ i b_n$ and 
$- i b_n\,$, encloses exactly the first $n$ poles of $r(s)/f(s)$ and, moreover, 
$\lim _{n \rightarrow \infty} b_n = \infty$.
\end{itemize}
If the condition 
$$\lim_{n\to \infty} \int_{\sigma_n}e^{st} \left( \frac{r}{f} \right)(s) ds =0 $$
(almost everywhere in $t$) holds, then the solution $\phi \in L^{p'}(\mathbb{R}_+)$ 
given by $(\ref{sol_lor})$, can be represented in the form
\begin{equation} \label{car-1}
\phi(t) = \frac{1}{2 \pi i} \int_{\sigma - i \infty}^{\sigma +
i\infty} e^{s\, t} \left( \frac{{\cal L}( J )}{f} \right)(s) \, ds +
\sum_{n=1}^\infty P_n(t)\,e^{\omega_n t} \; , \quad \sigma > 0\; ,
\end{equation}
in which $P_n(t)$ are polynomials of degree $\,r_n - 1$.
\end{cor}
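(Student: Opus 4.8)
The plan is to reproduce the residue computation of Corollary \ref{ic1}, but with the single closing arc at infinity replaced by the family of curves $\sigma_n$, so that the finite sum of residues becomes the limit of its partial sums. First I would observe, exactly as in the proof of Corollary \ref{ic1}, that $r/f$ lies in $H^p(\mathbb{C}_+)$: this is immediate because $(\mathcal{L}(J)+r)/f$ and $\mathcal{L}(J)/f$ both belong to $H^p(\mathbb{C}_+)$ by hypothesis and $H^p(\mathbb{C}_+)$ is a vector space. By linearity of $\mathcal{L}^{-1}$ I may therefore invert (\ref{sol_lor}) term by term via Theorem \ref{twDoe}(ii) and write
$$\phi(t) = \frac{1}{2\pi i}\int_{\sigma - i\infty}^{\sigma + i\infty} e^{s t}\left(\frac{\mathcal{L}(J)}{f}\right)(s)\, ds + \frac{1}{2\pi i}\int_{-i\infty}^{+i\infty} e^{s t}\left(\frac{r}{f}\right)(s)\, ds,$$
where the first integral is taken over some $\sigma > 0$ (as in (\ref{car-1})) while, since $r/f \in H^p(\mathbb{C}_+)$, the second may be taken over the line $Re(s) = 0$ --- both choices being permissible by Theorem \ref{twDoe}(ii), the integrals being understood as $L^{p'}$ limits of their truncations. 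The first summand is already the first term of (\ref{car-1}), so only the $r/f$ integral remains to be analysed.

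Next I would evaluate the $r/f$ integral by residues. For each $n$, I join the vertical segment of $Re(s) = 0$ from $-i b_n$ to $+i b_n$ with the curve $\sigma_n$; by the defining property of the $\sigma_n$ this produces a positively oriented closed contour enclosing exactly the first $n$ poles $\omega_1, \dots, \omega_n$ of $r/f$. The residue theorem then yields
$$\frac{1}{2\pi i}\int_{-i b_n}^{+i b_n} e^{s t}\left(\frac{r}{f}\right)(s)\, ds + \frac{1}{2\pi i}\int_{\sigma_n} e^{s t}\left(\frac{r}{f}\right)(s)\, ds = \sum_{k=1}^n P_k(t)\, e^{\omega_k t},$$
where, exactly as in Corollary \ref{ic1}, the residue at each $\omega_k$ is computed from the Laurent expansion of $r/f$ about $\omega_k$ together with Cauchy's formula, producing the polynomial $P_k(t)$ of degree $r_k - 1$. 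This residue computation is routine and carries over verbatim from the finite-pole case; the hypothesis on the curves $\sigma_n$ plays here the role that the growth condition (\ref{growth0}) played there.

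Finally I would let $n \to \infty$. Because $r/f \in H^p(\mathbb{C}_+)$, Theorem \ref{twDoe}(ii) (with $\sigma = 0$ and $v = b_n$) guarantees that the truncated vertical integrals converge, in $L^{p'}(\mathbb{R}_+)$, to $\mathcal{L}^{-1}(r/f)$ as $b_n \to \infty$, while the hypothesis $\lim_{n\to\infty}\int_{\sigma_n} e^{s t}(r/f)(s)\, ds = 0$ removes the contribution of the curves $\sigma_n$. Passing to the limit in the preceding identity then gives $\mathcal{L}^{-1}(r/f)(t) = \sum_{k=1}^\infty P_k(t)\, e^{\omega_k t}$, which is precisely the second term of (\ref{car-1}). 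I expect the main obstacle to be the careful handling of the mode of convergence: the inversion formula of Theorem \ref{twDoe} delivers an $L^{p'}$ limit, whereas the hypothesis on $\sigma_n$ is stated almost everywhere in $t$, so one must reconcile these two notions and justify that the resulting series converges to $\phi$ in the appropriate (almost everywhere, respectively $L^{p'}$) sense. Everything else reduces to the standard residue argument already carried out for Corollary \ref{ic1}.
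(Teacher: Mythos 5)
Your proposal is correct and follows essentially the same route as the paper: splitting $\phi$ into the $\mathcal{L}(J)/f$ integral plus the $r/f$ integral, closing the truncated vertical contour with $\sigma_n$ to pick up the first $n$ residues exactly as in Corollary \ref{ic1}, and passing to the limit using the hypothesis on $\int_{\sigma_n}$. The convergence subtlety you flag at the end is handled in the paper by extracting a sequence $b_n$ along which the truncated vertical integrals converge almost everywhere (not merely in $L^{p'}$), which yields the pointwise a.e.\ identity $\mathcal{L}^{-1}(r/f)(t)=\sum_n P_n(t)e^{\omega_n t}$ and then the $L^{p'}$ equality.
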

\begin{proof}
As in the previous corollary, we can write solution (\ref{sol_lor}) in the form 
\begin{equation}\label{inversion-1}
\phi(t) =
 \frac{1}{2 \pi i} \int_{ - i \infty}^{ + i\infty} e^{s\, t} \left( \frac{{\cal
L}( J )}{f} \right)(s)ds  + \frac{1}{2 \pi i} \int_{ - i \infty}^{ + i\infty} e^{s\, t} \left( \frac{r}{f} \right)(s) \, ds \; , 
\end{equation}
as an equation in $L^{p'}(\mathbb{R}_+)$. Since the integral in the second summand 
converges in $L^{p'}(\mathbb{R}_+)$, there exists an increasing sequence of real
numbers $b_n$ such that
$$
\lim_{n \rightarrow \infty} \frac{1}{2 \pi i} \int_{- i b_n}^{ + ib_n} e^{s\, t} 
\left( \frac{r}{f} \right)(s) \, ds = \mathcal{L}^{-1}(\frac{r}{f})(t)\; , \; \quad t \geq 0\; , $$
almost everywhere in $t$, in which $\lim$ is being taken in the metric topology of 
$\mathbb{C}$. Following Doetsch \cite[p. 170]{Doetsch}, we compute using residues:
$$
\frac{1}{2 \pi i} \int_{ - i b_n}^{ + ib_n} e^{s\, t} 
\left( \frac{r}{f} \right)(s) \, ds + \frac{1}{2 \pi i} \int_{\sigma_n} e^{s\, t} 
\left( \frac{r}{f} \right)(s) \, ds = \sum_{j=1}^n res_j(t) \; .
$$
These residues are calculated as in the previous corollary; we skip the details.
Taking limits and using the condition appearing in the hypotheses we obtain
$$
\mathcal{L}^{-1}(\frac{r}{f})(t) = \sum_{n=1}^\infty P_n(t)\,e^{\omega_n t} \; , 
$$
almost everywhere in $t$, in which $P_n(t)$ are polynomials of degree 
$\,r_n - 1$. Since we know
that $\mathcal{L}^{-1}(r/f) \in L^{p'}(\mathbb{R}_+)$, so is the above series, and 
we obtain (\ref{car-1}) as an equality in $L^{p'}(\mathbb{R}_+)$.
\end{proof}

\smallskip

\begin{rem}
The solution $(\ref{sol_lor})$ is not necessarily differentiable, see for instance
the  example appearing in {\rm \cite[p. 9]{GPR_CQG}}. This means, in particular, that
in complete generality we cannot even formulate initial value problem for equations 
of the form $f(\partial_t)\phi = J$.
\end{rem}

\section{The initial value problem}
In this section we discuss the initial value problem for equations of the form
\begin{equation} \label{lin_gen_1}
f(\partial_t) \phi = J \; , \quad \quad t \geq 0 \; ,
\end{equation}
in which $f$ is an (in principle arbitrary) analytic function, in the context of 
the theory developed in Section 2.

\subsection{Generalized initial conditions}

First of all we note that our abstract formula (\ref{sol_lor}) for the solution
$\phi$ to Equation (\ref{lin_gen_1}) tells us that --expanding the
analytic function $r$ appearing in (\ref{sol_lor}) as a power
series (or considering $r=r_d$ where $r_d$ is the series
(\ref{in_con111}))-- $\phi$ depends in principle on an infinite
number of arbitrary constants. However, this fact {\em does not}
mean that the equation itself is superfluous as sometimes argued in the
literature (see for instance \cite{EW}), as formula
(\ref{sol_lor}) for $\phi$ depends essentially on $f$ and $J$.
As remarked in \cite{GPR_CQG}, we may
think of $r$ as a ``generalized initial condition":

\begin{defi} \label{gics}
 A generalized initial condition for the equation
\begin{equation} \label{ivp0}
f(\partial_t) \phi = J
\end{equation}
is an analytic function $r_0$ such that $(\phi , r_0) \in D_f$
for some $\phi \in L^{p'}(0,\infty)$.
A generalized initial value problem is an equation such as $(\ref{ivp0})$ 
together with a generalized
initial condition $r_0$. A solution to a given generalized initial value problem
$\{ (\ref{ivp0}), r_0 \}$ is a function $\phi$ satisfying the conditions of
Definition $\ref{main_def}$ with $r = r_0$.
\end{defi}

Thus, given a generalized initial condition, we find a unique
solution for (\ref{lin_gen_1}) using (\ref{sol_lor}), much in the
same way as given {\em one} initial condition we find a unique
solution to  a first order linear ODE. As remarked 
after Corollary \ref{ic2}, there is no reason to believe that
(for a given $r$) the unique solution (\ref{sol_lor}) to
(\ref{ivp0}) will be analytic: within our general context, we can
only conclude that the solution belongs to the class $L^{p'}(0,\infty)\,$ for 
some $p' > 1$. It follows that classical initial value problems
do not make sense in full generality: Definition \ref{gics} is
what replaces them in the framework of our theory. Nonetheless, in
the next subsection we show that --provided $f$ and $J$ satisfy
some technical conditions-- we {\em can} define initial value problems
subject to a finite number of a priori local data.

\smallskip

\begin{example}
In \cite[Section 5]{B}, Barnaby considers D-brane
decay in a background de Sitter space-time. Up to some parameters,
the equation of motion is
\begin{equation} \label{neil}
e^{-2\, \square} (\square + 1) \phi = \alpha \phi^2 \; ,
\end{equation}
in which $\alpha$ is a constant. In de Sitter space-time we have
$\square = - \partial^2_t - \beta \partial_t$ for a constant
$\beta$, and so Equation (\ref{neil}) becomes
\begin{equation} \label{neil1}
e^{2(\partial_t^2 + \beta \partial_t)} (\partial_t^2 + \beta
\partial_t - 1) \phi = - \alpha \phi^2 \; .
\end{equation}

We check that $\phi_0(s) = 1/\alpha$ solves (\ref{neil1}) by using Equation
(\ref{aux1}). We set $f_0(s) = e^{2(s^2 + \beta s)}\,$ and we take as generalized
initial condition the series $r_0$ given by (\ref{in_con111}) with 
$d_j = \phi_0^{(j)}(0)$. We have,
\begin{eqnarray*}
\mathcal{L} ( f_0(\partial_t)\phi_0 )(s)  & = &
f_0(s) \mathcal{L}(\phi_0)(s) - \sum_{n=1}^\infty \sum_{j=1}^n \frac{f_0^{(n)}(0)}{n!}\, s^{n-j} \phi_0^{(j-1)}(0) \\
& = & f_0(s) \frac{1}{\alpha s} - \sum_{n=1}^\infty \frac{f_0^{(n)}(0)}{n!}\, s^{n-1} \frac{1}{\alpha} \\
& = & \frac{1}{\alpha s} \; .
\end{eqnarray*}
Thus, the left hand side of (\ref{neil1}) evaluated at $\phi_0$ equals
$-1/\alpha$, and the claim follows.

We linearize about $\phi_0$, recalling that rigorously
the domain of the nonlocal operator appearing in the left hand side of
Equation (\ref{neil1}) is formed by pairs $(\phi , r)$, see Definition
\ref{def0}. We obtain that if $\phi = \phi_ 0 + \tau\,\psi$ and 
$r = r_0 + \tau\,\tilde{r}$ for small $\tau$, then the deformation $\psi$ satisfies
the  equation
\begin{equation} \label{neil2}
e^{2(\partial_t^2 + \beta \partial_t)} (\partial_t^2 + \beta
\partial_t - 1) \psi + 2 \psi = 0 \; .
\end{equation}
We can solve this equation very easily for $\psi \in L^{p'}(0,\infty)$. Let us take
$p > 0$ such that $1/p + 1/p' =1$; we consider the entire function
\[
f(s) = e^{2(s^2 + \beta s)} (s^2 + \beta s - 1) + 2 \; ,
\]
and we let $r$ be a generalized initial condition, that is, $r$ is an
analytic function such that $r/f$ belongs to the class $H^p(\mathbb{C}_+)\,$. The
most general solution to (\ref{neil2}) is
\begin{equation}
\psi = \mathcal{L}^{-1}(r/f) \; .
\end{equation} 
\end{example}

\begin{rem}
\begin{itemize}

~

\item An equation similar to $(\ref{neil})$ appears in {\rm \cite{MN}}
in the context of Friedmann cosmology, and even more general equations of interest
for cosmology have been considered in {\rm \cite{Cal1}}. In the Friedmann case 
we have $\square = - \partial^2_t - 3 H(t) \partial_t$, in which $H$ represents 
the Hubble rate, and Equation $(\ref{neil1})$ must be changed accordingly. If we set
$f(t,s) = e^{2(s^2 + 3 H(t) s)} (s^2 + 3 H(t) s - 1) + 2$, we see that in this
case the ``symbol" $f(t,s)$ is beyond the class of symbols we are considering here.
We will come back to this example in another publication.

\item Besides the rigorous theory we considered in the previous papers
{\rm \cite{GPR_JMP,GPR_CQG}} and the one we develop herein, there exists a large 
body of literature on approximated, numerical, or formal solutions to nonlocal
equations, see for instance {\rm \cite{BK1,CN,Ko,Moeller,Vo}} and references therein. 
\end{itemize}
\end{rem}

\subsection{Classical initial value problems}

In this subsection we point out that
{\em if} we can unravel the abstract formula (\ref{sol_lor}) as in
Corollaries \ref{ic}, \ref{ic1} or \ref{ic2}, {\em then} we can define
and solve initial value problems depending on a finite number of initial  {\it local} data, and not on ``initial functions".

In Corollaries \ref{ic},
\ref{ic1}, and \ref{ic2} there are explicit formulas for
solutions to nonlocal equations of the form (\ref{lin_gen_1}). We would expect these 
formulas to help us in setting up initial
value problems. Now, Corollary \ref{ic} fixes completely the
solution using {\em only} $f$ and $J$, and therefore it leaves no
room for an initial value problem. On the other hand,
formula (\ref{car-1}) of Corollary \ref{ic2} depends on an
infinite number of parameters and, as explained in \cite{GPR_CQG},
this fact implies that we cannot ensure
differentiability of the solution (and hence existence of initial value problems) 
using {\em only} conditions on $f$ and
$J$. On the other hand, the explicit formula (\ref{car}) tells us that a
solution for the linear equation (\ref{lin_gen_1}) is uniquely
determined by $f$, $J$, and a {\em finite number} of parameters
related to the singularities of the quotient $r/f$. It is
therefore not unreasonable to expect that, by using these finitely
many parameters, we can set up consistent initial value problems,
as conjectured in \cite{Moeller}.

\smallskip

In order to ensure differentiability of solutions, we use the following 
two technical lemmas:

\begin{lem} \label{fo}
Let $J $ be a function such that $\mathcal{L}(J)$ exists and let $f$ be an 
analytic function. Suppose that there exist an integer $M \geq 0$ and a real number 
$\sigma >0$ such that
\begin{equation} \label{cond-1}
y \mapsto y^n\displaystyle\left( \frac{\mathcal{L}(J)(\sigma+iy)}{f(\sigma+iy)} \right)  \mbox{ belongs to }
L^1(\mathbb R)
\end{equation}
for each $n=0,\dots,M$; then the function
\begin{equation} \label{fo1}
t \mapsto \frac{1}{2 \pi i} \int_{\sigma - i \infty}^{\sigma +
i \infty} e^{s\,t}\, \left( \frac{\mathcal{L}(J)}{f} \right)(s)\,  ds
\end{equation}
is of class $C^M$.
\end{lem}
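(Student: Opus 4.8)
The plan is to parametrize the vertical contour and then differentiate under the integral sign, justifying the interchange by producing a single $L^1(\mathbb{R})$ dominating function that is valid simultaneously for all derivatives up to order $M$. Writing $G = \mathcal{L}(J)/f$ and parametrizing $s = \sigma + iy$, so that $ds = i\,dy$, the function in (\ref{fo1}) becomes
$$g(t) = \frac{1}{2\pi}\int_{-\infty}^{\infty} e^{(\sigma+iy)t}\,G(\sigma+iy)\,dy.$$
I set $\Phi(t,y) = e^{(\sigma+iy)t}\,G(\sigma+iy)$. The hypothesis (\ref{cond-1}) with $n=0$ says exactly that $\Phi(t,\cdot) \in L^1(\mathbb{R})$ for each fixed $t$, since $|e^{(\sigma+iy)t}| = e^{\sigma t}$ is constant in $y$; hence $g(t)$ is well-defined.

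Differentiating $n$ times in $t$ produces
$$\partial_t^n \Phi(t,y) = (\sigma+iy)^n\, e^{(\sigma+iy)t}\, G(\sigma+iy),$$
so that $|\partial_t^n \Phi(t,y)| = |\sigma+iy|^n\, e^{\sigma t}\, |G(\sigma+iy)|$. The core of the argument is to dominate these derivatives uniformly in $t$ on compact intervals. Fixing $T>0$, I would estimate, for $0 \le n \le M$ and $t \in [-T,T]$,
$$|\partial_t^n \Phi(t,y)| \le e^{\sigma T}\,\bigl(1 + |\sigma+iy|^M\bigr)\,|G(\sigma+iy)| =: h(y),$$
using $|\sigma+iy|^n \le 1 + |\sigma+iy|^M$ together with $|\sigma+iy|^M = (\sigma^2+y^2)^{M/2} \le (\sigma+|y|)^M$, which the binomial theorem expands into a sum of terms proportional to $|y|^k$ with $0 \le k \le M$. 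By the integrability hypothesis (\ref{cond-1}), each function $y \mapsto |y|^k\,|G(\sigma+iy)|$ with $0 \le k \le M$ lies in $L^1(\mathbb{R})$, and therefore $h \in L^1(\mathbb{R})$.

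With this uniform $L^1$ bound in hand, I would apply the standard theorem on differentiation of parameter-dependent integrals iteratively: since $\partial_t^n\Phi$ exists, is continuous in $t$ for every $y$, and is dominated by the fixed function $h \in L^1(\mathbb{R})$ for all $t \in [-T,T]$ and all $n \le M$, one may differentiate under the integral sign up to order $M$, obtaining
$$g^{(n)}(t) = \frac{1}{2\pi}\int_{-\infty}^{\infty} (\sigma+iy)^n\, e^{(\sigma+iy)t}\, G(\sigma+iy)\,dy, \qquad 0 \le n \le M.$$
Continuity of $g^{(M)}$ then follows from dominated convergence applied to any sequence $t_j \to t$, again with $h$ as the dominating function. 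Since $T$ is arbitrary, these conclusions hold on every compact subinterval, so $g \in C^M$, which is the claim.

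I expect the only genuine subtlety to be the construction of the single dominating function $h$: one must verify that the polynomial growth factor $|\sigma+iy|^n$ generated by each differentiation is absorbed precisely by the integrability hypotheses assumed for the indices $0,\dots,M$, and that this bound is uniform on compact $t$-intervals, so that both the Leibniz rule and the final continuity step apply. The remaining steps are routine real analysis.
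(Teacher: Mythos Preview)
Your proof is correct and follows the same approach as the paper: parametrize the vertical contour as $s=\sigma+iy$ and differentiate under the integral sign, using the hypotheses to justify the interchange. The paper's version is much terser---it simply writes the integral as $\frac{e^{\sigma t}}{2\pi}\int_{-\infty}^{\infty} e^{ity}\,G(\sigma+iy)\,dy$ and asserts that condition~(\ref{cond-1}) permits $M$ differentiations---whereas you have spelled out the dominating function and the dominated-convergence details explicitly.
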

\begin{proof}
After a suitable change of variables we have
\[
\frac{1}{2 \pi i} \int_{\sigma - i \infty}^{\sigma +
i \infty} e^{s\,t}\, \left( \frac{\mathcal{L}(J)}{f} \right)(s)\, ds
=
\frac{e^{\sigma t}}{2 \pi} \int_{- \infty}^{\infty} e^{-i t y}  \left( \frac{\mathcal{L}(J)(\sigma+iy)}{f(\sigma+iy)} \right) dy \; .
\]
It follows that the left hand side is of class $C^M$ as a function of $t$,
since (\ref{cond-1}) implies we can differentiate the right hand side of the
above equation with respect to $t$ at least $M$ times.
\end{proof}

We remark that Doetsch's representation theorem (Theorem \ref{twDoe}) implies that 
it is enough to assume condition (\ref{cond-1}) for {\it some} $\sigma >0$ instead 
of {\it for every} $\sigma >0$. Lemma \ref{fo} implies the following:

\begin{lem} \label{tech}
Let the functions $f$ and $J$ satisfy the conditions of Corollary $\ref{ic1}$, 
and also that  they satisfy $(\ref{cond-1})$ for some $\sigma > 0$. Then, the 
solution $(\ref{car})$ to the nonlocal equation
\begin{equation} \label{main}
f(\partial_t)\phi(t) = J(t)
\end{equation}
is of class $C^M$ for $t \geq 0$, and it satisfies the identities
\begin{equation}  \label{car-40}
\phi^{(n)}(0) =  L_n + \sum_{i=1}^N\, \sum_{k=0}^{n} \left(\begin{array}{c} n \\
k \end{array} \right) \omega_i^k \left. \frac{d^{n-k}}{dt^{n-k}}
\right|_{t=0} P_i(t) \; , \quad n=0,\dots,M \; ,
\end{equation}
for some numbers $L_n\,$.
\end{lem}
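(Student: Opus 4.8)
The plan is to exploit the decomposition of the solution furnished by Corollary $\ref{ic1}$: we write
\[
\phi(t) = I(t) + \sum_{i=1}^N P_i(t)\,e^{\omega_i t}\,,
\]
where $I(t)$ denotes the Bromwich integral appearing as the first summand of $(\ref{car})$, and we treat the two pieces separately. The residue term $\sum_{i=1}^N P_i(t)\,e^{\omega_i t}$ is a finite linear combination of products of polynomials and exponentials, hence is manifestly of class $C^\infty$ on all of $\mathbb{R}$; consequently all the regularity concerns reside in the integral term $I(t)$.

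For $I(t)$ I would invoke Lemma $\ref{fo}$ directly. The standing hypothesis that $f$ and $J$ satisfy $(\ref{cond-1})$ for some $\sigma>0$ is precisely what that lemma requires in order to conclude that $I$, which is exactly the function $(\ref{fo1})$, is of class $C^M$ for $t\geq 0$; here I would also recall the remark following Lemma $\ref{fo}$, so that assuming $(\ref{cond-1})$ for a single $\sigma$ is enough. Combining the $C^M$ regularity of $I$ with the smoothness of the residue term yields $\phi\in C^M$ on $[0,\infty)$, with the derivative at $t=0$ understood as a one-sided derivative. This establishes the first assertion of the lemma.

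To obtain $(\ref{car-40})$ I would differentiate the decomposition $n$ times, for $0\leq n\leq M$, and evaluate at $t=0$. Setting $L_n := I^{(n)}(0)$ — a well-defined complex number by the previous step — accounts for the contribution of the integral term. For each residue summand I would apply the Leibniz rule to $P_i(t)\,e^{\omega_i t}$, using $\frac{d^k}{dt^k}e^{\omega_i t}=\omega_i^k e^{\omega_i t}$, to get
\[
\frac{d^n}{dt^n}\bigl(P_i(t)\,e^{\omega_i t}\bigr) = \sum_{k=0}^n \binom{n}{k}\,\omega_i^k\,e^{\omega_i t}\,\frac{d^{n-k}}{dt^{n-k}}P_i(t)\,,
\]
and then set $t=0$ (so that $e^{\omega_i t}=1$) and sum over $i$; termwise differentiation of the finite sum is of course legitimate. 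This reproduces exactly the right-hand side of $(\ref{car-40})$.

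The only genuinely delicate point is the passage of the derivative through the Bromwich integral defining $I(t)$, that is, differentiation under the integral sign up to order $M$; but this is precisely the content packaged in Lemma $\ref{fo}$ via the integrability hypothesis $(\ref{cond-1})$. Once that lemma is invoked, the remainder is the elementary Leibniz computation above, so I expect no further obstacle.
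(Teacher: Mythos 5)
Your argument is correct and is exactly the intended one: the paper itself does not write out a proof (it defers to \cite{GPR_JMP,GPR_CQG}), but the formula $(\ref{ele})$ it records for $L_n$ coincides precisely with your identification $L_n = I^{(n)}(0)$, and the rest is the decomposition from Corollary $\ref{ic1}$, Lemma $\ref{fo}$ for the Bromwich integral, and the Leibniz rule on $P_i(t)e^{\omega_i t}$, just as you describe. The only point worth making explicit is that the equality in $(\ref{car})$ holds a priori only in $L^{p'}(\mathbb{R}_+)$, so one should say that $\phi$ is identified with the continuous representative given by the right-hand side before speaking of $\phi^{(n)}(0)$.
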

%
%\begin{proof}
%Indeed, let us consider Equation (\ref{car}) for the solution
%$\phi$ to Equation (\ref{main}). Conditions (\ref{cond-1}) implies
%that (\ref{car}) defines a $C^M$ function $\phi$ on the semi-axis
%$t \geq 0$. The $t$-derivatives $\phi^{(n)}(t)$ of the solution
%$\phi(t)$ are given by
%\begin{equation} \label{car-2}
%\phi^{(n)}(t) =  \frac{d^{n}}{dt^{n}} \left( \frac{1}{2 \pi i}
%\int_{\omega - i \infty}^{\omega + i\infty} e^{s t} \left(
%\frac{{\cal L}( J )}{f} \right) \, ds \right) + \sum_{i=1}^N\,
%e^{\omega_i t} \sum_{k=0}^{n} \left(\begin{array}{c} n \\
%k \end{array} \right) \omega_i^k \frac{d^{n-k}}{dt^{n-k}}P_i(t) \;
%,
%\end{equation}
%and so Equation (\ref{car-40}) follows with
%\begin{equation} \label{ele}
%L_n = \left.\frac{d^{n}}{dt^{n}}\right|_{t=0} \left( \frac{1}{2 \pi i}
%\int_{\omega - i \infty}^{\omega + i\infty} e^{s t} \left(
%\frac{{\cal L}( J )}{f} \right) \, ds \right)\; .
%\end{equation}
%\end{proof}

This result is proven in \cite{GPR_JMP, GPR_CQG}. The numbers $L_n$, for 
$n=0,\dots,M$, are computed to be
\begin{equation} \label{ele}
L_n = \left.\frac{d^{n}}{dt^{n}}\right|_{t=0} \left( \frac{1}{2 \pi i}
\int_{\sigma - i \infty}^{\sigma + i\infty} e^{s t} \left(
\frac{{\cal L}( J )}{f} \right)(s) \, ds \right)\; .
\end{equation}

\smallskip

\smallskip

\smallskip

Our main result on initial value problems is the following. 
(The word ``generic" appearing
in the enunciate will be explained in the proof, see paragraph below Equation (33) ).

\smallskip

\begin{tw} \label{ivp2}
We fix real numbers $1 < p \leq 2$ and $p' > 0$ such that $1/p + 1/p' = 1$, and we also 
fix an integer $N \geq 0$.
Let $f$ be a function which is analytic in a region $D$ which contains
$\{ s \in \mathbb{C} : Re (s) > 0\}$, and let $J$ be a function in
$L^{p'}(\mathbb{R}_+)$ satisfying the conditions $\mathcal{L}(J) \in H^p(\mathbb{C}_+)$ 
and $\mathcal{L}(J)/f \in H^p(\mathbb{C}_+)$. We choose points $\omega_i$, $i=1, \cdots,N$, 
to the left of $Re(s)=0$, and positive integers $r_i\,$, $i =1, . . . ,N$. 
Set $K=\sum_{i=1}^N r_i$ and assume that for some $\sigma >0$ condition
$(\ref{cond-1})$ holds for each $n=0,\dots,M$, $M \geq K$.
Then, generically, given $K$ values $\phi_0, \dots, \phi_{K-1}$,
there exists a unique analytic function $r_0$ such that
\begin{itemize}
\item[$(\alpha)$] $\displaystyle \frac{r_0}{f} \in H^p(\mathbb{C}_+)$ and
it has a finite number of poles $\omega_i$ of order $r_i$, $i=1,\dots,N$;
\item[$(\beta)$] $\displaystyle \frac{\mathcal{L}(J) + r_0}{f} \in
        H^p(\mathbb{C}_+)$;
\item[$(\gamma)$] $\displaystyle \left| \frac{r_0}{f}(s) \right| \leq \frac{M}{|s|^q}$ for some $q \geq 1$ and $|s|$ sufficiently large.
\end{itemize}

Moreover, the unique solution $\phi = {\cal L}^{-1} \left( \frac{{\cal L}( J )+r_0}{f} \right)$ to Equation $(\ref{main})$ belongs to $L^{p'}(\mathbb{R}_+)$, 
is of class $C^K$, and it satisfies $\phi(0) = \phi_0, \dots, \phi^{(K-1)}(0)=\phi_{K-1}\,$.
\end{tw}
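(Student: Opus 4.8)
The plan is to collapse the whole statement into one finite-dimensional linear-algebra problem. First I would construct the candidate $r_0$ by prescribing the quotient $r_0/f$ directly, setting
$$\frac{r_0(s)}{f(s)} = \sum_{i=1}^{N}\sum_{j=1}^{r_i}\frac{a_{j,i}}{(s-\omega_i)^j},$$
with the $K=\sum_i r_i$ coefficients $a_{j,i}$ still to be chosen, and defining $r_0 := f\cdot(r_0/f)$. Because every $\omega_i$ lies to the left of $Re(s)=0$, this rational function is analytic on $\mathbb{C}_+$ and decays like $1/|s|$ at infinity; its vertical integrals $\mu_p$ are uniformly bounded for $x>0$ (the poles stay away from the imaginary axis), so it lies in $H^p(\mathbb{C}_+)$ for every $p>1$ and obeys the growth bound $(\gamma)$ with $q=1$. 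This yields $(\alpha)$ for any choice of the $a_{j,i}$, and $(\beta)$ is then immediate, since $(\mathcal{L}(J)+r_0)/f = \mathcal{L}(J)/f + r_0/f$ is a sum of two elements of $H^p(\mathbb{C}_+)$, the first by hypothesis and the second by $(\alpha)$.

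Next I would read off the derivatives at the origin. With $\mathcal{L}(J)/f\in H^p(\mathbb{C}_+)$ and condition (\ref{cond-1}) in force for $n\le M$, $M\ge K$, the data $f$, $J$ and the chosen poles satisfy the hypotheses of Lemma \ref{tech}; hence the solution $\phi=\mathcal{L}^{-1}((\mathcal{L}(J)+r_0)/f)$ furnished by Theorem \ref{main_thm} via (\ref{sol_lor}) is of class $C^K$ and its derivatives obey (\ref{car-40}). Since $P_i(t)=a_{1,i}+a_{2,i}\,t/1!+\cdots+a_{r_i,i}\,t^{r_i-1}/(r_i-1)!$ gives $\left.\frac{d^{m}}{dt^{m}}\right|_{t=0}P_i=a_{m+1,i}$, formula (\ref{car-40}) rearranges into
$$\phi^{(n)}(0)=L_n+\sum_{i=1}^{N}\sum_{j=1}^{n+1}\binom{n}{j-1}\,\omega_i^{\,n-j+1}\,a_{j,i},\qquad n=0,\dots,K-1,$$
with the convention $a_{j,i}=0$ for $j>r_i$ and the constants $L_n$ given by (\ref{ele}). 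Imposing $\phi^{(n)}(0)=\phi_n$ turns the search for $r_0$ into the $K\times K$ affine system $A\,a=(\phi_n-L_n)_n$ for the unknown vector $a=(a_{j,i})$.

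The heart of the matter is the invertibility of $A$, and this is exactly where the word \emph{generic} enters. The key observation is the identity $\binom{n}{j-1}\omega_i^{\,n-j+1}=\frac{1}{(j-1)!}\,\frac{d^{\,j-1}}{d\omega^{\,j-1}}\omega^{n}\big|_{\omega=\omega_i}$, so the columns attached to a single pole $\omega_i$ are successive $\omega$-derivatives of the Vandermonde column $(\omega^{n})_{n=0}^{K-1}$ evaluated at $\omega_i$; thus $A$ is, up to harmless factorial normalisations, a confluent Vandermonde matrix, whose determinant is a nonzero constant multiple of $\prod_{i<i'}(\omega_i-\omega_{i'})^{r_i r_{i'}}$. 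Hence $A$ is nonsingular precisely when the $\omega_i$ are pairwise distinct, which is the genericity hypothesis, and then the system has a unique solution $a$; this fixes the principal parts, hence $r_0$ (the decay and pole conditions $(\alpha)$--$(\gamma)$ being understood to single out the rational representative), and the resulting $\phi$ is the $C^K$ solution of (\ref{main}) with $\phi^{(n)}(0)=\phi_n$ for $0\le n\le K-1$. I expect the linear-algebra step to be the main obstacle: one must justify the confluent-Vandermonde identification carefully, i.e. track the binomial/derivative structure coming from (\ref{car-40}) and confirm that distinctness of the $\omega_i$ is exactly the needed genericity. By contrast, the functional-analytic ingredients ($(\alpha)$--$(\beta)$ and the $C^K$ regularity) follow routinely from Theorem \ref{twDoe}, Theorem \ref{main_thm} and Lemma \ref{tech}.
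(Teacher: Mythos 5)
Your proposal is correct and follows essentially the same route as the paper's proof: the same ansatz $r_0 = f\cdot\mathcal{L}\bigl(\sum_i P_i(t)e^{\omega_i t}\bigr)$ (your partial-fraction prescription of $r_0/f$ is exactly the Laplace transform of $\sum_i P_i e^{\omega_i t}$), the same $H^p$ estimate exploiting that the poles stay a fixed distance to the left of the imaginary axis, and the same linear system (\ref{car-4}) obtained from (\ref{car-40}). The one genuine addition is your identification of the system matrix as a confluent Vandermonde matrix with determinant proportional to $\prod_{i<i'}(\omega_{i'}-\omega_i)^{r_i r_{i'}}$: the paper leaves ``generically'' as an abstract non-vanishing condition on the determinant of (\ref{car-4}) and only records the (ordinary) Vandermonde structure in the special case $r_i=1$ in the remark after Corollary \ref{cor20}, whereas your computation shows the system is solvable precisely when the $\omega_i$ are pairwise distinct, effectively removing the genericity caveat.
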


\begin{proof}
We consider the $K$ arbitrary numbers $\phi_n$, $n=0,1,\ldots ,K-1$. Recalling
(\ref{car-40}), Lemma \ref{fo}, and Lemma \ref{tech},
we set up the linear system
\begin{equation}  \label{car-4}
\phi_n =  L_n + \sum_{i=1}^N\, \sum_{k=0}^{n} \left(\begin{array}{c} n \\
k \end{array} \right) \omega_i^k \left. \frac{d^{n-k}}{dt^{n-k}}
\right|_{t=0} P_i(t) \; , \; \; \; n = 0 \dots K-1\; ,
\end{equation}
in which the unknowns are the coefficients of polynomials
$$
P_i(t) = a_{1,i} + a_{2,i}\, \frac{t}{1!} + \cdots + a_{r_i,i}\, \frac{t^{r_i - 1}}{(r_i -1)!} \; ,
$$
and the numbers $L_n$ are given by (\ref{ele}).
%$$L_n = \left.\frac{d^{n}}{dt^{n}}\right|_{t=0} \left( \frac{1}{2 \pi i}
%\int_{\sigma - i \infty}^{\sigma + i\infty} e^{s t} \left(
%\frac{{\cal L}( J )}{f} \right)(s) \, ds \right)\; ,$$

System (\ref{car-4}) can be solved
(generically, this is, away from the variety in $\mathbb{R}^N$ defined by all the points $\omega_i$
for which the main determinant of the linear system (33) vanishes) uniquely in terms of the data
$\phi_n\,$. We define $r_0$ on the half-plane
$\{ s \in \mathbb{C} : Re (s) > 0\}$ as follows:
\begin{equation} \label{l-1}
r_0(s) = f(s)\, {\mathcal L} \left( \sum_{i=1}^N P_i(t) e^{\omega_i t} \right)(s) \; .
\end{equation}
Then, on this half-plane we have the identity
\begin{equation} \label{l-2}
 {\mathcal L}^{-1}(r_0/f)(t) = \sum_{i=1}^N P_i(t) e^{\omega_i t} 
\end{equation}
for all $t \in \mathbb{R}_+$.  
Let us prove that $\dfrac{r_0}{f}$ belongs to $H^p(\mathbb{C}_+)$. 
First of all, we have the standard formula
$$\mathcal{L}(bt^n e^{\omega_j t})(s)= \frac{bn!}{(s-\omega_j)^{n+1}}$$ for  
any $n\in \mathbb{N}$ and $b \in \mathbb{C}$, and so the function $\dfrac{r_0}{f}$ 
is analytic on the half-plane $Re(s)>0$. We show that $\mu_{p}(\dfrac{r_0}{f},x)$ 
is uniformly bounded on $x>0$. 
For this, it is enough to prove that the function
$ \mu_{p}(\mathcal{L}(b t^n e^{\omega_j t}),x)$ is uniformly bounded for $x>0\,$. 
Indeed, since the poles $\omega_j:= a_j +ib_j$ of $\dfrac{r_0}{f}$ satisfy 
$ a_j= Re(\omega_j)<0$, we have $|x -a_j| >|a_j|>0$ for $x > 0$. 
Therefore:
\begin{eqnarray}
\int_{-\infty}^{-\infty} \left| \mathcal{L}(b t^n e^{\omega_j t})(x +i y))  \right|^p dy & = & |b|n! \int_{-\infty}^{-\infty}\frac{dy}{|x +i y-\omega_j|^{(n+1)p}} \nonumber \\
& = & |b|n! \int_{-\infty}^{\infty}\frac{dy}{((x-a_j)^2+(y-b_j)^2)^{\frac{n+1}{2}p}} \nonumber \\
& = & |b|n! \int_{-\infty}^{\infty}\frac{d \xi}{((x-a_j)^2+\xi^2)^{\frac{n+1}{2}p}} \nonumber \\
& < & 2|b|n! \int_{0}^{\infty}\frac{d \xi}{(a_j^2+\xi^2)^{\frac{n+1}{2}p}} \label{p1case}\\
& < & \infty. \nonumber
%2|b|n! \left( \int_{0}^{1}\frac{d \xi}{(\delta_j^2+\xi^2)^{\frac{n+1}{2}}}+ \int_{1}^{\infty}\frac{d \xi}{(\delta_j^2+\xi^2)^{\frac{n+1}{2}}} \right) \\
\end{eqnarray}
Thus, $ \mu_{p}(\mathcal{L}(b t^n e^{\omega_j t}),x)$ is uniformly bounded for $x>0$. 
This shows that $\dfrac{r_0}{f}$ belongs to $H^p(\mathbb{C}_+)$, and ($\alpha$) 
follows. 

We easily check that $r_0/f$ also satisfies conditions ($\beta$), and ($\gamma$)
appearing in the enunciate of the theorem; we omit the details. 

Now we define the function
\begin{equation} \label{car-7}
\phi(t) = \frac{1}{2 \pi i}
\int_{\sigma - i \infty}^{\sigma + i\infty} e^{s t} \left( \frac{{\cal L}( J )}{f} \right)(s) \, ds 
+ \sum_{i=1}^N P_i(t)\,e^{\omega_i\, t} 
\end{equation}
in $L^{p'}(\mathbb{R}_+)$. We claim that this function solves Equation (\ref{main}) 
and that it satisfies the conditions appearing in the enunciate of the theorem.
In fact, the foregoing analysis implies that
\[
 \phi(t) = \frac{1}{2 \pi i} \int_{\sigma - i \infty}^{\sigma +
i\infty} e^{s t} \left( \frac{{\cal L}( J )}{f} \right)(s) \, ds +
{\mathcal L}^{-1}(r_0/f)(t) \; ,
\]
and this is precisely the unique solution to (\ref{main}) appearing in Corollary 
\ref{ic1} for $r=r_0$.

\smallskip

It remains to show that this solution satisfies $\phi^{(n)}(0) = \phi_n$ for
$n = 0 , \dots, K-1$.
Indeed, condition (\ref{cond-1}) tells us that $\phi(t)$ is at least of class 
$C^K$ and clearly
\begin{equation}  \label{car-9}
\phi^{(n)}(0) =  L_n + \sum_{i=1}^N\, \sum_{k=0}^{n} \left(\begin{array}{c} n \\
k \end{array} \right) \omega_i^k \left. \frac{d^{n-k}}{dt^{n-k}}
\right|_{t=0} P_i(t)
\end{equation}
in which
\begin{equation} \label{ele-9}
L_n = \left.\frac{d^{n}}{dt^{n}}\right|_{t=0} \left( \frac{1}{2 \pi i}
\int_{\sigma - i \infty}^{\sigma + i\infty} e^{s t} \left(
\frac{{\cal L}( J )}{f} \right)(s) \, ds \right)\; .
\end{equation}
Comparing (\ref{car-4}) and (\ref{car-9}) we obtain $\phi^{(n)}(0) = \phi_n$, 
$n = 0, \dots, K-1$.
\end{proof}

\smallskip

\begin{rem}

~

\begin{itemize}
\item The proof of Theorem $\ref{ivp2}$ breaks down if $p=1$, as we lose the 
uniform bound $(\ref{p1case})$. This is the reason why, as advanced in Subsection 
$2.1$, ``$L^1$-correspondence theorems" (see {\rm \cite[Theorem 2]{Chu}} and 
{\rm \cite[Theorem 28.2]{Doetsch}}) do not allow us to 
obtain a meaningful theory of initial value problems. On the other hand, it is 
not hard to develop Section $2$ within the framework of {\rm \cite{Chu,Doetsch}}.

\item Theorem $\ref{ivp2}$ tells us that we can freely chose the first 
$K$ derivatives $\phi^{(n)}(0)$, $n=0, \dots , K-1$ of the solution $\phi$ to
Equation $(\ref{main})$ but, from $n = K$ onward, if the derivative
$\phi^{(n)}(0)$ exists, it is completely determined by $(\ref{car-9})$ and 
$(\ref{ele-9})$. 
Thus, it does not make sense to formulate an initial value problem 
$f(\partial_t)\phi(t) = J(t)$, $\phi^{(n)}(0) = \phi_n$, with more than
$K$ arbitrary initial conditions.
\end{itemize}
\end{rem}

The above proof shows that the {\em a priori} given points $\omega_i$
become the poles of the quotient $r_0/f$, that the {\em a priori} given numbers 
$r_i$ are their respective orders, and that {\em it is essential}  to give
this information in order to have meaningful initial value problems. If
no points $\omega_i$ are present, the solution to the nonlocal equation 
(\ref{main}) is simply
\[
\phi = \mathcal{L}^{-1}(\mathcal{L}(J)/f)\; ,
\]
a formula which fixes completely (\,for $f$ and $J$ satisfying (\ref{cond-1})\,) 
the values of the 
derivatives of $\phi$ at zero. This discussion motivates the following definition:

\begin{defi} \label{civp}
A classical initial value problem for nonlocal equations is a triplet
formed by a nonlocal equation
\begin{equation} \label{ivp00}
f(\partial_t) \phi = J \; ,
\end{equation}
a finite set of data:
\begin{equation} \label{total_data}
\left\{ N \geq 0\; ; \; \; \{\omega_i \in \mathbb{C} \}_{1 \leq i \leq N}\; ;
\; \; \{r_i \in \mathbb{Z}, r_i > 0 \}_{1 \leq i \leq N}\; ;
\; \; \{\{ \phi_n \}_{0 \leq n \leq K-1}, \textstyle{K = \sum_{i=1}^N r_i} \} \right\} \; ,
\end{equation}
and the conditions
\begin{equation} \label{defic}
\phi(0) = \phi_0\; , \quad \phi'(0)= \phi_1\; , \quad \cdots \; .
\phi^{(K-1)}(0)=\phi_{K-1} \; .
\end{equation} 
A solution to a classical initial value problem given by
$(\ref{ivp00})$, $(\ref{total_data})$ and $(\ref{defic})$ is a pair
$(\phi,r_0) \in D_f$ satisfying the conditions of Definition $\ref{main_def}$ 
with $r = r_0$ such that $\phi$ is differentiable at zero and $(\ref{defic})$ holds.
\end{defi}

Theorem \ref{ivp2} implies that this definition makes sense.
From its proof we also deduce the following important corollaries:

\begin{cor}
We fix $1 < p \leq 2$ and $p' > 0$ such that $1/p + 1/p' = 1$.
Let $f$ be function which is analytic on the half-plane 
$\{ s \in \mathbb{C} : Re(s) > 0 \}$, 
and fix a function $J$ in $L^{p'}(0,\infty)$
such that $\mathcal{L}(J) \in H^p(\mathbb{C}_+)$ and 
$\mathcal{L}(J)/f \in H^p(\mathbb{C}_+)$.
Then, generically (in the sense of Theorem $\ref{ivp2}$), the classical initial 
value problem $(\ref{ivp00})$--$(\ref{defic})$ has a unique solution which 
depends smoothly on the initial conditions $(\ref{defic})$.
\end{cor}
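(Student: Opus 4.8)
The plan is to deduce the whole statement from Theorem \ref{ivp2}, treating existence and uniqueness as an immediate consequence and reserving the genuinely new content — the smooth dependence on the data — for a short analysis of the affine structure of the solution map. First I would observe that the finite set of data (\ref{total_data}) is exactly the input consumed by Theorem \ref{ivp2}: the integer $N$, the points $\omega_i$ to the left of $\mathrm{Re}(s)=0$, the orders $r_i$, and the $K=\sum_{i=1}^N r_i$ prescribed values $\phi_0,\dots,\phi_{K-1}$. Since $J$ satisfies $\mathcal{L}(J)\in H^p(\mathbb{C}_+)$ and $\mathcal{L}(J)/f\in H^p(\mathbb{C}_+)$ by hypothesis, Theorem \ref{ivp2} applies verbatim and, generically, produces a unique analytic $r_0$ satisfying $(\alpha)$–$(\gamma)$ together with the solution $\phi=\mathcal{L}^{-1}((\mathcal{L}(J)+r_0)/f)$, which is of class $C^K$ and realizes $\phi^{(n)}(0)=\phi_n$ for $n=0,\dots,K-1$. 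This is precisely a solution to the classical initial value problem in the sense of Definition \ref{civp}, and its uniqueness is the uniqueness already asserted in Theorem \ref{ivp2} (via Theorem \ref{main_thm}); so the existence–uniqueness half needs no fresh argument.

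For the dependence on the initial conditions, the step I would carry out is to exhibit the solution map as an affine map of the data. Writing the solution as in (\ref{car-7}),
$$\phi = \mathcal{L}^{-1}\!\left(\frac{\mathcal{L}(J)}{f}\right) + \sum_{i=1}^N P_i(t)\, e^{\omega_i t},$$
the first summand is fixed by $f$ and $J$ and is independent of the initial data, while the coefficients $a_{j,i}$ of the polynomials $P_i$ are obtained by solving the linear system (\ref{car-4}). In matrix form that system reads $\vec{\phi}=\vec{L}+A\,\vec{a}$, where the entries of $\vec{L}$ are the fixed numbers (\ref{ele}) and the matrix $A$ depends only on the a priori points $\omega_i$. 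In the generic regime $A$ is invertible, so $\vec{a}=A^{-1}(\vec{\phi}-\vec{L})$ is an affine-linear function of the data $\vec{\phi}=(\phi_0,\dots,\phi_{K-1})$; since $\phi$ in turn depends linearly on $\vec{a}$ through the finitely many functions $t^m e^{\omega_i t}$, the assignment $\vec{\phi}\mapsto\phi$ is an affine map from $\mathbb{C}^K$ into $L^{p'}(\mathbb{R}_+)$.

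Finally I would verify that this affine map is $C^\infty$ as a map into $L^{p'}(\mathbb{R}_+)$. Its linear part has finite rank, spanned by the functions $t^m e^{\omega_i t}$ with $0\le m\le r_i-1$; because $\mathrm{Re}(\omega_i)<0$, each of these lies in $L^{p'}(\mathbb{R}_+)$ (this is the same integrability used in (\ref{p1case})), so the linear part is a bounded operator from $\mathbb{C}^K$ to $L^{p'}(\mathbb{R}_+)$. A bounded affine map is automatically smooth, which is the asserted smooth dependence.

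I do not expect a serious obstacle, since Theorem \ref{ivp2} does the analytic heavy lifting; the only delicate point is the genericity. The invertibility of $A$ is exactly what can fail, and it fails precisely on the determinant variety described in the proof of Theorem \ref{ivp2}. Off that variety the inverse $A^{-1}$ exists and depends smoothly on the fixed $\omega_i$, and the argument goes through; the word \emph{generically} in the enunciate should be read in this same sense, so no additional hypothesis is needed beyond those already imposed.
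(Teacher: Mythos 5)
Your proposal is correct and follows essentially the same route as the paper: the paper's proof also consists of citing Theorem \ref{ivp2} for existence and uniqueness via formula (\ref{car-7}) and then observing that the coefficients of the polynomials $P_i$ depend smoothly on the data because they solve the linear system (\ref{car-4}). Your additional verification that the resulting affine map lands boundedly in $L^{p'}(\mathbb{R}_+)$ (using $\mathrm{Re}(\omega_i)<0$) is a detail the paper leaves implicit, but it does not change the argument.
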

\begin{proof}
The function (\ref{car-7}) is the unique solution to (\ref{ivp00}) and it
satisfies (\ref{defic}). Moreover, the coefficients of the polynomials $P_i$ 
appearing in (\ref{car-7}) depend smoothly on the data $\phi_n$ and $\omega_i$, 
since they are solutions to the linear problem (\ref{car-4}).
\end{proof}

\begin{cor} \label{cor20}
We fix $1 < p \leq 2$ and $p' > 0$ such that $1/p + 1/p' = 1$.
Let $f$ be function which is analytic on the half-plane 
$\{ s \in \mathbb{C} : Re(s) > 0 \}$, 
and fix $J$ in $L^{p'}(0,\infty)$
such that $\mathcal{L}(J) \in H^p(\mathbb{C}_+)$ and 
$\mathcal{L}(J)/f \in H^p(\mathbb{C}_+)$. 
Suppose also that there exist a positive integer $K$ such that for some $\sigma >0$
condition $(\ref{cond-1})$ holds, and consider a set of complex numbers
$ \phi_0,\phi_1,\cdots, \phi_{K-1}$. Then, there exists a classical initial 
value problem $(\ref{ivp00})$--$(\ref{defic})$.
 \end{cor}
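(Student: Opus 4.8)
The plan is to reduce the corollary to a direct application of Theorem~\ref{ivp2} by exhibiting one explicit admissible data set of the form \eqref{total_data}. Given the positive integer $K$ and the prescribed complex numbers $\phi_0,\dots,\phi_{K-1}$, the first step is to supply the missing combinatorial data. I would take $N=K$, set all orders equal to one, $r_i=1$ for $i=1,\dots,K$ (so that $\sum_{i=1}^{N} r_i = K$ holds automatically), and choose $K$ \emph{distinct} points $\omega_1,\dots,\omega_K$ lying strictly to the left of the line $Re(s)=0$. This determines the full data set \eqref{total_data}, and it remains only to verify that the resulting triplet is a genuine, solvable classical initial value problem in the sense of Definition~\ref{civp}.

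The second step is to confirm the genericity hypothesis of Theorem~\ref{ivp2}, namely the non-vanishing of the main determinant of the linear system \eqref{car-4}. Here the choice $r_i=1$ is decisive: each polynomial $P_i(t)$ reduces to a single constant $a_{1,i}$, so every derivative term $\frac{d^{n-k}}{dt^{n-k}}\big|_{t=0} P_i(t)$ vanishes except when $k=n$, and \eqref{car-4} collapses to the Vandermonde system $\phi_n = L_n + \sum_{i=1}^{K}\omega_i^{\,n}\,a_{1,i}$ for $n=0,\dots,K-1$. Its coefficient matrix $\bigl(\omega_i^{\,n}\bigr)$ has determinant $\prod_{1\le i<j\le K}(\omega_j-\omega_i)$, which is nonzero precisely because the $\omega_i$ were taken distinct. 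Hence the chosen points lie off the exceptional variety appearing in Theorem~\ref{ivp2}, and the system is uniquely solvable for the constants $a_{1,i}$ in terms of the data $\phi_n$ and the numbers $L_n$ given in \eqref{ele}.

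With admissibility established, the third step is simply to invoke Theorem~\ref{ivp2}. The standing hypotheses on $f$ and $J$ in the corollary --- analyticity of $f$ on $\{Re(s)>0\}$, the memberships $\mathcal{L}(J)\in H^p(\mathbb{C}_+)$ and $\mathcal{L}(J)/f\in H^p(\mathbb{C}_+)$, and condition \eqref{cond-1} for $n=0,\dots,M$ with $M\ge K$ --- are exactly those required by the theorem. It therefore yields a unique analytic $r_0$ satisfying $(\alpha)$--$(\gamma)$, together with the solution $\phi=\mathcal{L}^{-1}\bigl((\mathcal{L}(J)+r_0)/f\bigr)\in L^{p'}(\mathbb{R}_+)$ of class $C^K$ with $\phi^{(n)}(0)=\phi_n$ for $n=0,\dots,K-1$. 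Consequently the triplet formed by \eqref{ivp00}, the data \eqref{total_data}, and the conditions \eqref{defic} is a classical initial value problem, which proves the existence asserted in Corollary~\ref{cor20}.

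I expect the only genuinely delicate point to be the verification of the determinant (genericity) condition, and the entire force of the simple-pole choice $r_i=1$ is to render this transparent through the Vandermonde structure; had I instead chosen a single pole of order $K$, I would have had to control the confluent Vandermonde analogue of \eqref{car-4}, which is correct but far less immediate.
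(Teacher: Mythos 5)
Your proposal is correct and follows the same basic route as the paper: construct an admissible data set of the form \eqref{total_data} and invoke Theorem~\ref{ivp2}. The paper's own proof is in fact shorter and weaker than yours --- it merely observes that \emph{some} decomposition $K=\sum_{i=1}^N r_i$ and \emph{some} points $\omega_i$ with $Re(\omega_i)<0$ exist, which already produces a triplet in the sense of Definition~\ref{civp}, and it leaves the solvability of the resulting problem to the ``generic'' clause of Theorem~\ref{ivp2}. Your additional step --- taking $N=K$, $r_i=1$, and \emph{distinct} $\omega_i$, so that \eqref{car-4} collapses to the Vandermonde system $\phi_n=L_n+\sum_{i=1}^K\omega_i^{\,n}a_{1,i}$ with determinant $\prod_{i<j}(\omega_j-\omega_i)\neq 0$ --- explicitly verifies that your choice avoids the exceptional variety, and hence yields an initial value problem that is actually (not just generically) uniquely solvable. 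This is essentially the content of the remark the paper places immediately after the corollary (where the $\omega_i$ are taken to be zeros of $f$ and the same Vandermonde determinant appears), so your argument is a legitimate strengthening rather than a departure.
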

\begin{proof}
Since $K>0$ we can find positive integers $N, r_1, \cdots , r_N$ such that 
$K= \sum_{i=1}^{N}r_i$. Also, 
we can choose $N$ complex numbers $\omega_i, 1\leq i \leq N$ to the left of 
$Re(s)= 0$.
Thus, we have constructed the data (\ref{total_data}). The nonlocal equation 
(\ref{ivp00}) is $f(\partial_t)\phi(t) = J(t)$, and conditions (\ref{defic})
are determined by the given complex numbers $\phi_0, \cdots \phi_{K-1}$.
\end{proof}

\begin{rem}
We note that there is at least one natural way to choose the poles
$\omega_i\,$ appearing in the proof of Corollary $\ref{cor20}$: 
if the symbol $f$ has $K$ zeroes (counting with multiplicities), say
$\{z_1,z_2\cdots,z_K\}$, to the left of $Re(s)<0$, and condition 
$(\ref{cond-1})$ holds 
for $n = 0 , \cdots , K$, we can obtain a unique solution
to the initial value problem $(\ref{ivp00})$--$(\ref{defic})$
if we choose $\omega_i = z_i$. In this case, the determinant of the
linear system $(\ref{car-4})$ is precisely the non-zero determinant of the
$K\times K$-Vandermonde matrix $A_K=(a_{ji})$ with $a_{ji} =z_{i}^{j-1}$, 
$1\leq j,i\leq K$.
\end{rem}

\section{Zeta nonlocal equations}

In this section we apply our previous approach to zeta nonlocal equations of the form
\begin{equation} \label{eqzeta_01}
\zeta_h(\partial_t)\phi(t)=J(t) \; , \; \; \; \; \; t\geq 0 \; ,
\end{equation}
in which $h$ is a real parameter and the symbol $\zeta_h$ is the shifted Riemann 
zeta function
$$\zeta_h(s) := \zeta(h+s)=\sum_{n=1}^{\infty}\dfrac{1}{n^{s+h}}\; .$$
These equations are motivated by the cosmological models appearing in  
\cite{AV2,D,D1,D2}. In turn, it has been called to the authors' attention by 
A. Koshelev that
the approach of \cite{AV2} is at least partially based on \cite{Ko}, in which the
author introduces a formal schemme for analyzing some nonlocal equations of interest
for cosmology.

\smallskip

We recall some properties of the Riemann zeta function 
$\zeta(s):=\sum_{n=1}^{\infty}\dfrac{1}{n^s}$, $Re(s)>1$, following \cite{KaVo}. 
It is analytic on its domain of definition and it has an analytic continuation to the whole 
complex plane with the exception of the point $s=1$, 
at which it has a simple pole with residue $1$.  The analytic continuation  of the Riemann 
zeta function will be also 
denoted by $\zeta$, and we will refer to it also as the Riemann zeta function.

From standard properties of the Riemann zeta function (see \cite{KaVo}) we have that the shifted 
Riemann zeta function $\zeta_h$ is analytic for $Re(s)>1-h$, and  uniformly and absolutely
convergent for $Re(s)\geq \sigma_0 >1-h$. We also 
find (see \cite[Chp. I.6]{KaVo}) that $\zeta_h$ has infinite zeroes at the points 
$\{-2n-h : n\in \mathbb{N}\}$ (we call them ``trivial zeroes") and that it also has 
``nontrivial" zeroes in the region $-h <  Re(s) < 1-h$ (we call this region 
the ``critical region" of $\zeta_h$).

\smallskip

\smallskip

The Euler product expansion for the shifted Riemann zeta function is
$$\zeta_h(s)=\prod_{p\in\mathcal{P}}\left(1-\dfrac{1}{p^{s+h}}\right)^{-1}\; ,$$
where $\mathcal{P}$ is the set of the prime numbers. Therefore, for $Re(s)=\sigma >1-h$, we have
\begin{eqnarray}\label{eqzeta_02}
\left| \dfrac{1}{\zeta_h(s)} \right| & = & \left|\prod_{p\in \mathcal{P}}\left(1-\dfrac{1}{p^{s+h}}\right) \right|
= \left|\sum_{n=1}^{\infty}\dfrac{\mu(n)}{n^{s+h}} \right|\nonumber \\
&\leq& \sum_{n=1}^{\infty}\dfrac{1}{n^{\sigma+h}}\leq 1+\int_1^{\infty}\dfrac{dx}{x^{\sigma+h}}=\dfrac{\sigma+h}{\sigma+h-1}\; ,
\end{eqnarray}
where $\mu(\cdot)$ is the M\"oebius function defined as follows: $\mu(1)=1,\; \; \mu(n)=0$ if $n$ is 
divisible by the square of a prime, and $\mu(n)=(-1)^k$ if n is the product of $k$ distinct prime 
numbers, see \cite[Chp. II.2]{KaVo}.

\smallskip

We study Equation (\ref{eqzeta_01}) for values of $h$ in the region $(1,\infty)$, since in this case 
$\zeta_h$ is analytic for $Re(s)>0$ and the theory developed in the previous sections apply.  
%Moreover, as we show in Section 5 below 
%(see Example \ref{exa_riemann}) if $h > 1$ we can interpret $\zeta_h(\partial_t)$ rigorously as an operator.
%\subsection{The case $h\in (1,\infty)$}
We start with the following lemma:
%If we assume that $h>1$, then there exists $\delta>0$ such that
%$h> 1+\delta >1$. The consideration of $\delta$ is for convergence purposes.
%This assumption is not strictly necessary for solving
%(\ref{eqzeta_01}), but it
%insures that the $L^2$-theory developed in Section 5 below applies to
%the operator $\zeta_h(\partial_t)$.

\begin{lem} \label{a2}
We fix $1 < p \leq 2$ and $p' > 0$ such that $1/p + 1/p' = 1$. 
Let us assume that $J\in L^{p'}(\mathbb{R}_+)$ and that
$\mathcal{L}(J)$ is in the space $H^p(\mathbb{C}_+)$. Then, the function 
$F =\dfrac{\mathcal{L}(J)}{\zeta_h}$ belongs to $H^p(\mathbb{C}_+)\,$.
\end{lem}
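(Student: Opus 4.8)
The plan is to show that $F = \mathcal{L}(J)/\zeta_h$ is analytic on $\mathbb{C}_+$ and that $\mu_p(F,x)$ is uniformly bounded for $x>0$; these two facts together establish $F \in H^p(\mathbb{C}_+)$. First I would address analyticity. Since $h>1$, the shifted zeta function $\zeta_h(s) = \zeta(s+h)$ is analytic on $\{Re(s)>1-h\}$, a region containing the closed half-plane $\{Re(s)\geq 0\}$ because $1-h<0$. The only threat to analyticity of the quotient is a zero of $\zeta_h$ inside $\mathbb{C}_+$. But the trivial zeroes sit at $s=-2n-h<0$ and the nontrivial zeroes lie in the critical region $-h<Re(s)<1-h$, which is strictly to the left of $Re(s)=0$ since $1-h<0$. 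Hence $\zeta_h$ has no zeroes in $\mathbb{C}_+$, $1/\zeta_h$ is analytic there, and as $\mathcal{L}(J)$ is analytic on $\mathbb{C}_+$ (being a Laplace transform, and in fact in $H^p(\mathbb{C}_+)$ by hypothesis), the product $F$ is analytic on $\mathbb{C}_+$.

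The substantive step is the uniform bound on the integral means. The key is the pointwise estimate $(\ref{eqzeta_02})$, which gives, for $Re(s)=\sigma>1-h$,
\begin{equation*}
\left| \frac{1}{\zeta_h(s)} \right| \leq \frac{\sigma+h}{\sigma+h-1}\; .
\end{equation*}
I would use this to control $F$ on each vertical line $Re(s)=x$ with $x>0$. Writing $s = x+iy$, we have
\begin{equation*}
|F(x+iy)| = \frac{|\mathcal{L}(J)(x+iy)|}{|\zeta_h(x+iy)|} \leq \frac{x+h}{x+h-1}\, |\mathcal{L}(J)(x+iy)|\; ,
\end{equation*}
so that raising to the $p$th power and integrating in $y$ yields
\begin{equation*}
\mu_p(F,x) \leq \frac{x+h}{x+h-1}\, \mu_p(\mathcal{L}(J),x)\; .
\end{equation*}
Since $\mathcal{L}(J)\in H^p(\mathbb{C}_+)$ by hypothesis, $\mu_p(\mathcal{L}(J),x)$ is bounded by $\|\mathcal{L}(J)\|_{H^p(\mathbb{C}_+)}$ uniformly in $x>0$.

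It remains to bound the scalar prefactor $\frac{x+h}{x+h-1}$ uniformly for $x>0$. The map $x\mapsto \frac{x+h}{x+h-1}$ is decreasing on $(0,\infty)$ when $h>1$ (its denominator stays positive since $x+h-1>h-1>0$), so its supremum over $x>0$ is attained as $x\to 0^+$ and equals $\frac{h}{h-1}$, which is finite precisely because $h>1$. Therefore
\begin{equation*}
\|F\|_{H^p(\mathbb{C}_+)} = \sup_{x>0}\mu_p(F,x) \leq \frac{h}{h-1}\, \|\mathcal{L}(J)\|_{H^p(\mathbb{C}_+)} < \infty\; ,
\end{equation*}
which completes the proof. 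The main obstacle I anticipate is not any single estimate but ensuring that the pointwise Euler-product bound $(\ref{eqzeta_02})$ is genuinely valid on the entire range $x>0$ needed here; this is exactly where the hypothesis $h>1$ is used twice — once to place all zeroes of $\zeta_h$ strictly left of $\mathbb{C}_+$ (guaranteeing analyticity) and once to keep the prefactor $\frac{x+h}{x+h-1}$ finite and uniformly bounded as $x\to 0^+$. The restriction $p\le 2$ plays no active role in this particular lemma; it is inherited from the ambient setting where Doetsch's theorem applies.
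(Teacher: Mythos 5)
Your proof is correct and follows essentially the same route as the paper: both arguments rest on the Euler-product/M\"obius estimate $\left|1/\zeta_h(s)\right|\leq \frac{\sigma+h}{\sigma+h-1}$ for $Re(s)=\sigma>1-h$, pull the scalar factor $\frac{x+h}{x+h-1}$ out of the integral mean $\mu_p(F,x)$, and use the uniform bound on $\mu_p(\mathcal{L}(J),x)$ together with the boundedness of $x\mapsto\frac{x+h}{x+h-1}$ on $x>0$. You supply slightly more detail than the paper on why $F$ is analytic (locating the zeroes of $\zeta_h$ strictly left of $\mathbb{C}_+$) where the paper simply calls this clear, but the substance is identical.
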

\begin{proof}  We have:

\begin{itemize}
\item[1.] The function $F$ is clearly analytic for every $s$ such that
$Re(s)>0$.
\item[2.] Since $\mathcal{L}(J) \in H^p(\mathbb{C}_+) $ we have that $\mu_p(\mathcal{L}(J),x)$ is uniformly bounded for $x>0$,
%$\|\mathcal{L}(J)(\sigma+i\cdot)\|_{L^1(\mathbb{R})}<\infty$.
Now for $x>0$ and using inequality (\ref{eqzeta_02}), we obtain
\begin{eqnarray*}
\mu_p(F,x)&=&\left( \int_{\mathbb{R}}\left|\dfrac{1}{\zeta_h(x+iy)}\mathcal{L}(J)(\sigma+iy)\right|^p  dy \right)^{\frac{1}{p}} \\
&\leq& \dfrac{x+h}{x+h-1}\left( \int_{\mathbb{R}}|\mathcal{L}(J)(x+iy)|^p dy\right)^{\frac{1}{p}} <\infty \; .
%&\leq& \dfrac{h-\delta}{h-\delta-1}\int_{\mathbb{R}}|\mathcal{L}(J)(\sigma+ix)|dx<\infty\; .
\end{eqnarray*}
Since the function $x\to \dfrac{x+h}{x+h-1}$ is uniformly bounded for $x\geq 0$, the result follows.
%
%\item[3.]Let us note that for $Re(s)\geq \sigma > 1-h$ and by inequality (\ref{eqzeta_02}) we have
%
%%$\epsilon >0$, $\sigma_0>-\delta$ and $Re(s) = \sigma>\sigma_0$; using Definition \ref{a33} we have that there exist constants
%%$C_\epsilon(\sigma_0) \geq 0$ and $m= m(\sigma_0) \geq 0$ such that (see inequality (\ref{a3}) applied to $\mathcal{L}(J)$):
%\[
%|F(s)|=\left|\dfrac{\mathcal{L}(J)}{\zeta_h}(s)\right| \leq \dfrac{\sigma+h}{\sigma+h-1}\, \left|\mathcal{L}(J)(s)\right| .
%\]
%But, since $\mathcal{L}(J) \in \mathcal{H}_{1-h}$ we have that
%$$\lim_{|s|\to \infty} |F(s)|=0$$
%on the closed half-plane $Re(s)\geq \sigma$.
%
%\begin{eqnarray*}
%|F(s)| &\leq& \dfrac{\sigma_0+h}{\sigma_0+h-1}\,
% C_\epsilon(\sigma_0)\, e^{\epsilon \sigma} (1 + |s|^m)\; ,
%\end{eqnarray*}
%
%%where $m(\sigma_0) \geq 0$ is arbitrary.
%and therefore inequality (\ref{a3}) for $F$ is satisfied if we set
%$$C_{\epsilon}'(\sigma_0)=  C_\epsilon(\sigma_0)\,\dfrac{\sigma_0+h}{\sigma_0+h-1}\; .$$
\end{itemize}
\end{proof}

Now we note that we can replace the general condition (\ref{cond-1}) for the following 
assumption on the function $J$:
\begin{itemize}
\item[{\bf (H)}] For some $M \geq0$, for each $ n=1,2,3, \cdots ,M$ and for some $\sigma >0$ 
we have, $$y\to y^n\mathcal{L}(J)(\sigma+iy)\in L^1(\mathbb{R}).$$
% for each $n=1,2,3, \cdots ,M.$
\end{itemize}

\noindent Condition $\bf{(H)}$ is enough to ensure differentiability of the function
 $$t\to \dfrac{1}{2\pi i}\int_{\sigma-i\infty}^{\sigma+i\infty}e^{st} \left(\dfrac{\mathcal{L}(J)}{\zeta_h}\right)(s)ds \; ,$$
as asked in the hypotheses of Lemma \ref{fo}. Our main theorem on classical initial value problems for the zeta non-local equation (\ref{eqzeta_01})
%
%\begin{equation} \label{a333}
%\zeta_h(\partial_t)\phi(t)=J(t) \; , \; \; \; \; \; t\geq 0 \;
%\end{equation}
is the following
\begin{tw} \label{ZetaIVP01}
We fix $1 < p \leq 2$ and $p' > 0$ such that $1/p + 1/p' = 1$. 
Let $\zeta_h$ be the shifted Riemann zeta function, and assume that $J \in L^{p'}(\mathbb{R}_+)$ 
with $\mathcal{L}(J) \in H^p(\mathbb{C}_+)$. We also fix a number $N \geq 0$, a finite number
of points $\omega_i$, $i =1, . . . ,N$, to the left of $Re(s)=0$, and a finite number of positive
integers $r_i$. We set $K=\sum_{i=1}^N r_i$ and we assume that condition $\bf{(H)}$ holds for all
$n=0,\dots,M$, $M \geq K$. Then, generically, given $K$ initial conditions, 
$\phi_0, \dots, \phi_{K-1}$, there exists a unique analytic function $r_0$ such that
\begin{itemize}
\item[$(\alpha)$] $\displaystyle \dfrac{r_0}{\zeta_h} \in H^p(\mathbb{C}_+)$ and it has a finite number of poles $\omega_i$ of order $r_i$, $i=1,\dots,N$ to the left of $Re(s)=0\,$;
\item[$(\beta)$] $\displaystyle \frac{\mathcal{L}(J)+r_0}{\zeta_h} \in H^p(\mathbb{C}_+)$;
\item[$(\gamma)$] $\displaystyle  \left| \frac{r_0}{\zeta_h}(s) \right| \leq \frac{M}{|s|^q}$ for some $q \geq 1$ and $|s|$ sufficiently large.
\end{itemize}
Moreover, the unique solution $\phi$ to Equation $(\ref{eqzeta_01})$ given by
$(\ref{sol_lor})$ with $r = r_0$ is of class $C^K$ and it satisfies
$\phi(0) = \phi_0, \dots, \phi^{(K-1)}(0)=\phi_{K-1}$.
\end{tw}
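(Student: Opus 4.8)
The plan is to recognize Theorem \ref{ZetaIVP01} as a direct specialization of the general initial value problem result, Theorem \ref{ivp2}, to the case $f = \zeta_h$, so the bulk of the work is to verify that the hypotheses of Theorem \ref{ivp2} are satisfied when the symbol is the shifted Riemann zeta function. First I would observe that since $h > 1$, the function $\zeta_h(s) = \zeta(s+h)$ is analytic on the half-plane $Re(s) > 1-h$, and in particular on all of $\{Re(s) > 0\}$, so $\zeta_h$ qualifies as an admissible symbol $f$ for Theorem \ref{ivp2}. Next I would invoke Lemma \ref{a2}, which establishes precisely the key structural hypothesis of Theorem \ref{ivp2} that is not obvious for a generic symbol, namely that $\mathcal{L}(J)/\zeta_h \in H^p(\mathbb{C}_+)$ whenever $\mathcal{L}(J) \in H^p(\mathbb{C}_+)$; this is where the Euler product estimate (\ref{eqzeta_02}) does the real work, giving the uniform bound $|1/\zeta_h(s)| \leq (\sigma+h)/(\sigma+h-1)$.

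With the two membership conditions $\mathcal{L}(J) \in H^p(\mathbb{C}_+)$ and $\mathcal{L}(J)/\zeta_h \in H^p(\mathbb{C}_+)$ in hand, I would then check that condition (\ref{cond-1}) of Theorem \ref{ivp2} follows from the zeta-specific hypothesis $\textbf{(H)}$. The point is that $\textbf{(H)}$ controls $y^n \mathcal{L}(J)(\sigma + iy)$ in $L^1(\mathbb{R})$, and the uniform boundedness of $1/\zeta_h(\sigma + iy)$ along the vertical line (again from (\ref{eqzeta_02})) transfers this integrability to $y^n \mathcal{L}(J)(\sigma+iy)/\zeta_h(\sigma+iy)$, which is exactly what (\ref{cond-1}) requires. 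Since $1/\zeta_h$ is not merely bounded but bounded by a constant independent of $y$ for fixed $\sigma$, multiplying an $L^1$ function by it keeps it in $L^1$, so $\textbf{(H)}$ implies (\ref{cond-1}) for each $n = 0, \dots, M$ with the same $\sigma$.

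Once these verifications are complete, the conclusion is immediate: Theorem \ref{ivp2} applied with $f = \zeta_h$ produces, generically in the given points $\omega_i$, the unique analytic function $r_0$ satisfying $(\alpha)$, $(\beta)$, and $(\gamma)$ (now written with $\zeta_h$ in place of $f$), together with the unique solution $\phi = \mathcal{L}^{-1}((\mathcal{L}(J)+r_0)/\zeta_h)$ of class $C^K$ realizing the prescribed initial data $\phi(0) = \phi_0, \dots, \phi^{(K-1)}(0) = \phi_{K-1}$. The genericity is inherited verbatim from Theorem \ref{ivp2}: it refers to avoiding the subvariety where the determinant of the linear system (\ref{car-4}) vanishes.

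I do not expect a genuine obstacle here, since the theorem is an application rather than a new argument; the only place demanding care is confirming that $\textbf{(H)}$ is the correct zeta-adapted surrogate for (\ref{cond-1}), which hinges entirely on the uniform decay estimate for $1/\zeta_h$ furnished by the Euler product. The one subtlety worth flagging explicitly is that the trivial and nontrivial zeroes of $\zeta_h$ all lie in $Re(s) \leq 1-h < 0$, so they do not interfere with analyticity or with the $H^p$ estimates on $\mathbb{C}_+$; this is implicitly why Lemma \ref{a2} holds and should be stated so the reader sees that the poles $\omega_i$ of $r_0/\zeta_h$ are genuinely free parameters and not forced by the symbol.
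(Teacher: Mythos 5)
Your proposal is correct and follows essentially the same route as the paper: the authors likewise reduce Theorem \ref{ZetaIVP01} to Theorem \ref{ivp2} by noting that $\zeta_h$ is analytic on $Re(s)>0$ for $h>1$, invoking Lemma \ref{a2} for $\mathcal{L}(J)/\zeta_h \in H^p(\mathbb{C}_+)$, and using the boundedness of $1/\zeta_h$ from the Euler product estimate (\ref{eqzeta_02}) to pass from condition $\textbf{(H)}$ to condition (\ref{cond-1}). Your write-up is in fact more explicit than the paper's two-line proof about why $\textbf{(H)}$ suffices, which is a point the paper only addresses in the remark preceding the theorem.
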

\begin{proof}
The proof consists in checking that the hypotheses of Theorem \ref{ivp2}
hold. In fact, $\zeta_h$ is analytic on $\{ s \in \mathbb{C} : Re(s) > 0 \}$, and Lemma \ref{a2} tells us that $\mathcal{L}(J)/\zeta_h$ belongs to $H^p(\mathbb{C}_+)$.
\end{proof}

\section{An $L^2(\mathbb{R}_+)$-theory for linear nonlocal equations}

In this section we show that in the $p=p'=2$ case of the foregoing theory, we can justify 
rigorously the interpretation of $f(\partial_t)$ as an operator in infinitely many derivatives 
on an appropriated domain. Our approach uses a notion of  analytic vectors, motivated by 
Nelson's classical paper \cite{Nelson}.

\begin{defi}\label{defEntire}
Let $A$ be a linear operator from a Banach space $\mathbb{V}$ to itself, and let 
$f:\mathbb{R} \to \mathbb{C}$ be a complex valued function, such that $f^{(n)}(0)$ 
exist for all $n\geq 0$. We say that $v \in \mathbb{V}$ is a $f$-analytic vector for
$A$ if $v$ is in the domain of $A^n$ for all $n\geq 0$ and the series
$$\sum_{n=0}\dfrac{f^{(n)}(0)}{n!}A^n v \; ,$$
defines a vector in $\mathbb{V}$.
\end{defi}

As stated in Section 2, the Paley-Wiener theorem (see \cite{Har,KoYo}) is the following 
special case of Doetsch's representation theorem:

\begin{tw} \label{pw}
The following assertions hold:
\begin{itemize}
\item[1)] If $g\in L^2(\mathbb{R}_+)$, then $\mathcal{L}(g)\in H^2(\mathbb{C}_+)$.
\item[2)] Let $G\in H^2(\mathbb{C}_+)$. Then the function 
$$g(t) = \frac{1}{2 \pi i}\int_{\sigma-i \infty}^{\sigma + i \infty} e^{s t} G(s) ds \; , \sigma > 0 \; , $$
is independent on $\sigma$, it belongs to $L^2(\mathbb{R}_+)$ and it satisfies 
$G=\mathcal{L}(g)$.
\end{itemize}
Moreover the Laplace transform $\mathcal{L}: L^2(\mathbb{R}_+) \to H^2(\mathbb{C}_+)$ 
is a unitary operator.
\end{tw}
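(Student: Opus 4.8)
The plan is to prove both inclusions via Plancherel's theorem, exploiting the elementary but crucial observation that, along each vertical line $Re(s)=\sigma$, the Laplace transform \emph{is} a Fourier transform. Concretely, writing $s=\sigma+i\tau$ with $\sigma>0$, one has $\mathcal{L}(g)(\sigma+i\tau)=\int_0^\infty e^{-i\tau t}\big(e^{-\sigma t}g(t)\big)\,dt$, so that $\mathcal{L}(g)(\sigma+i\,\cdot\,)$ is the Fourier transform of the function $e^{-\sigma t}g(t)$ extended by zero to $t<0$.

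For assertion 1) I would first note that for $g\in L^2(\mathbb{R}_+)$ and any $\sigma>0$ the function $e^{-\sigma t}g(t)$ lies in $L^1\cap L^2(\mathbb{R}_+)$ (by Cauchy--Schwarz, since $e^{-\sigma t}\in L^2$), so the defining integral converges absolutely and $\mathcal{L}(g)$ is analytic on $\mathbb{C}_+$ by the facts recalled after Definition~2.1. Plancherel's theorem then gives $\mu_2(\mathcal{L}(g),\sigma)^2=\int_0^\infty e^{-2\sigma t}|g(t)|^2\,dt$, which is bounded above by $\|g\|_{L^2(\mathbb{R}_+)}^2$ uniformly in $\sigma>0$; hence $\mathcal{L}(g)\in H^2(\mathbb{C}_+)$. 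Since $e^{-2\sigma t}$ decreases in $\sigma$, the supremum over $\sigma>0$ is attained as $\sigma\to 0^+$, and monotone convergence yields $\|\mathcal{L}(g)\|_{H^2(\mathbb{C}_+)}=\|g\|_{L^2(\mathbb{R}_+)}$, so $\mathcal{L}$ is an isometry.

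For assertion 2), given $G\in H^2(\mathbb{C}_+)$, for each $\sigma>0$ the slice $\psi_\sigma:=G(\sigma+i\,\cdot\,)$ lies in $L^2(\mathbb{R})$ with $\|\psi_\sigma\|_{L^2}^2=2\pi\,\mu_2(G,\sigma)^2\le 2\pi\|G\|_{H^2(\mathbb{C}_+)}^2$. I would define $g_\sigma$ by the inversion integral, i.e.\ $g_\sigma(t)=e^{\sigma t}\,\check{\psi}_\sigma(t)$, where $\check{\psi}_\sigma$ is the inverse Fourier transform of $\psi_\sigma$. The heart of the argument is to show that $g_\sigma$ does not depend on $\sigma$: fixing $\sigma_1<\sigma_2$ and integrating the analytic function $e^{st}G(s)$ around the rectangle with vertical sides $Re(s)=\sigma_1,\sigma_2$ and horizontal sides at heights $\pm R$, Cauchy's theorem reduces the claim to the vanishing of the two horizontal contributions. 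These need not vanish for every $R$, but the finiteness of $\int_{\sigma_1}^{\sigma_2}\!\int_{\mathbb{R}}|G(\sigma+i\tau)|^2\,d\tau\,d\sigma$ forces, by Fubini and Cauchy--Schwarz, their vanishing along some sequence $R_n\to\infty$, which suffices. This contour-shifting estimate is the step I expect to be the main obstacle, as it is the only place where pointwise control of $G$ must be replaced by the merely $L^2$-averaged control available in $H^2(\mathbb{C}_+)$.

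Once independence of $\sigma$ is established, write $g:=g_\sigma$. The uniform bound $\int_{\mathbb{R}}e^{-2\sigma t}|g(t)|^2\,dt=\|\check{\psi}_\sigma\|_{L^2}^2\le\|G\|_{H^2(\mathbb{C}_+)}^2$ for all $\sigma>0$ does the remaining work: letting $\sigma\to\infty$ shows that $\int_{-\infty}^0 e^{-2\sigma t}|g|^2\,dt$ would blow up unless $g$ vanishes almost everywhere on $(-\infty,0)$, so $g$ is supported on $\mathbb{R}_+$; and letting $\sigma\to0^+$ with monotone convergence gives $\int_0^\infty|g|^2\,dt\le\|G\|_{H^2(\mathbb{C}_+)}^2$, so $g\in L^2(\mathbb{R}_+)$. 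Finally, by Fourier inversion, $\mathcal{L}(g)(\sigma+i\,\cdot\,)=\widehat{e^{-\sigma\,\cdot\,}g}=\psi_\sigma=G(\sigma+i\,\cdot\,)$ for every $\sigma>0$, whence $\mathcal{L}(g)=G$ on $\mathbb{C}_+$; the independence of the inversion integral on $\sigma$ asserted in 2) is then immediate. Combining 1) and 2), $\mathcal{L}$ is an isometric bijection of Hilbert spaces, hence unitary. (Alternatively, both inclusions are the $p=p'=2$ case of Doetsch's Representation theorem, Theorem~\ref{twDoe}; the self-contained Plancherel computation above is what upgrades the constants $C(2),K(2)$ appearing there to the exact isometry needed for unitarity, and one may also simply cite \cite{Har,KoYo}.)
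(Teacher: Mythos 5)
Your argument is essentially correct, but it is worth noting that the paper does not prove this theorem at all: it is stated as a known result (the Paley--Wiener theorem for the half-plane), justified only by the citations \cite{Har,KoYo} and by the observation that it is the $p=p'=2$ case of Doetsch's Representation Theorem (Theorem \ref{twDoe}). Your self-contained Plancherel proof is the standard textbook route and it does buy something the paper's citation of Theorem \ref{twDoe} alone would not: the exact equality $\mu_2(\mathcal{L}(g),\sigma)^2=\int_0^\infty e^{-2\sigma t}|g(t)|^2\,dt$, which upgrades the inequalities with constants $C(p)$, $K(p)$ to the isometry needed for the word ``unitary'' in the statement --- a point you correctly flag in your closing parenthesis. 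Two small places deserve more care. First, the inversion integral in assertion 2) converges only as a limit in $L^2$ of the truncated integrals $\frac{1}{2\pi}\int_{-R}^{R}e^{(\sigma+i\tau)t}G(\sigma+i\tau)\,d\tau$, not absolutely; your rectangle-contour argument therefore proves that the truncated integrals at $\sigma_1$ and $\sigma_2$ differ by a quantity tending to zero pointwise in $t$ along the subsequence $R_n$, and you must then pass to a further subsequence to convert the $L^2$ convergence of each truncated family into a.e.\ convergence before concluding $g_{\sigma_1}=g_{\sigma_2}$ a.e. Second, the identity $\|\check\psi_\sigma\|_{L^2(\mathbb{R})}^2=\mu_2(G,\sigma)^2$ (rather than $\|G\|_{H^2}^2$ as written in one display) is what gives the uniform bound; this is clearly what you intended. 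With those points tightened, the proof is complete and, combined with the isometry from 1) and the surjectivity from 2), yields unitarity.
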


We examine existence of analytic vectors for the operator $\partial_t$ 
on $L^2(\mathbb{R}_+)$.

\begin{lem}\label{Avec}
Let $f$ be an analytic function on a region containing zero, and
let $R_1$ be the maximun radius of convergence of the Taylor series
$f_T(s) := \sum_{n=0}^\infty \frac{f^{(n)}(0)}{n!}\, s^n \; $.
 \begin{itemize}
\item [a)] If $p$ is a polynomial on $\mathbb{R}_+$, and $I$ a finite interval on 
$\mathbb{R}_+$, then the function $\psi:= p\cdot \chi_I$ is
 an $f$-analytic vector for $\partial_t$ on $L^2(\mathbb{R}_+)$.
 \item [b)] Let $R_1>1$.
 If $\psi \in C^{\infty}(\mathbb{R}_+)\cap L^2(\mathbb{R}_+)$ such that for all 
 $n\geq 1$ and some
 $h\in L^2(\mathcal{I})$ for $\mathcal{I}$ equal to either $\mathbb{R}$ or 
 $\mathbb{R}_+$, 
 we have $||\psi^{(n)}||_{L^2(\mathbb{R}_+)}\leq c(n)||h||_{L^2(\mathcal{I})}$, with 
 $\{c(n)\}_{n\in \mathbb{N}}=:c \in l^1(\mathbb{N})$. Then $\psi$ is an 
 $f$-analytic vector for 
 $\partial_t$ on $L^2(\mathbb{R}_+)$.
 \end{itemize}
\end{lem}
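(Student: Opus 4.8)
The plan is to show that for both classes of functions $\psi$, the series $\sum_{n=0}^\infty \frac{f^{(n)}(0)}{n!}\partial_t^n\psi$ converges in $L^2(\mathbb{R}_+)$. The key observation is that an infinite series $\sum_n a_n\,v_n$ in a Banach space converges whenever $\sum_n \|a_n v_n\| < \infty$, so it suffices to establish the summability of the norms $\frac{|f^{(n)}(0)|}{n!}\|\partial_t^n\psi\|_{L^2(\mathbb{R}_+)}$. Since $\partial_t^n\psi = \psi^{(n)}$ for smooth enough $\psi$, the whole argument reduces to controlling the growth of $\|\psi^{(n)}\|_{L^2(\mathbb{R}_+)}$ against the decay of the Taylor coefficients $\frac{f^{(n)}(0)}{n!}$, which by the root test decay like $R_1^{-n}$ up to subexponential factors (here $R_1$ is the radius of convergence of $f_T$).

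For part (a), I would first note that $\psi = p\cdot\chi_I$ is smooth on the interior of $I$ but is supported on a \emph{finite} interval, and that its distributional derivatives $\partial_t^n\psi$ are, for $n$ large, combinations of derivatives of $p$ (which vanish once $n$ exceeds $\deg p$) together with derivatives of the characteristic function, i.e.\ point masses and their derivatives at the endpoints of $I$. The cleanest route is to avoid distributional subtleties and instead invoke the correspondence of Definition \ref{defEntire} directly: I need $\psi$ in the domain of $\partial_t^n$ for all $n$, so I should interpret $\partial_t$ as the generator acting on its natural domain in $L^2(\mathbb{R}_+)$. The main subtlety here is that $p\cdot\chi_I$ is not even continuous, so one must be careful about what "$\partial_t^n\psi$" means; I expect the intended reading is that only finitely many derivatives of $p$ survive, and the finite-interval support keeps every resulting object in $L^2$, so the series is actually a \emph{finite} sum and convergence is trivial. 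I would make this precise by computing $\partial_t\psi$ explicitly and checking that after $\deg(p)+1$ applications the iterates stabilize or vanish in the relevant sense, so that only finitely many terms of the defining series are nonzero.

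For part (b), the hypothesis is tailored to give exactly the norm bound needed: $\|\psi^{(n)}\|_{L^2(\mathbb{R}_+)}\le c(n)\|h\|_{L^2(\mathcal{I})}$ with $\{c(n)\}\in\ell^1$. The plan is to estimate
\[
\sum_{n=0}^\infty \frac{|f^{(n)}(0)|}{n!}\,\|\psi^{(n)}\|_{L^2(\mathbb{R}_+)}
\;\le\; \|h\|_{L^2(\mathcal{I})}\sum_{n=0}^\infty \frac{|f^{(n)}(0)|}{n!}\,c(n)\,,
\]
and to argue this last series is finite. Since $R_1>1$ is the radius of convergence of $f_T$, the coefficients satisfy $\frac{|f^{(n)}(0)|}{n!}\le C\,\rho^{n}$ for any fixed $\rho$ with $1<\rho^{-1}<R_1$, i.e.\ $\frac{|f^{(n)}(0)|}{n!}$ is eventually bounded by a constant (one may take $\rho<1$). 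Combined with $c\in\ell^1$, the product series converges by comparison, giving absolute convergence of $\sum_n \frac{f^{(n)}(0)}{n!}\partial_t^n\psi$ in $L^2(\mathbb{R}_+)$ and hence that $\psi$ is an $f$-analytic vector.

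The step I expect to be the genuine obstacle is part (a), specifically making rigorous sense of $\partial_t^n(p\cdot\chi_I)$ within the $L^2(\mathbb{R}_+)$ framework without the iterates leaving $L^2$. The role of $R_1>1$ is not assumed in (a), which strongly suggests the argument there is combinatorial-finite rather than analytic, so I would expect to show the defining series terminates; the honest difficulty is reconciling the nonsmoothness of $\chi_I$ with the requirement that $\psi$ lie in the domain of every power $\partial_t^n$, and I would resolve it by exhibiting the precise (finitely supported) form of the iterated derivatives and verifying each lies in $L^2(\mathbb{R}_+)$.
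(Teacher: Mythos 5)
Your proposal is correct and follows essentially the same route as the paper: for part (b) the paper likewise observes that $R_1>1$ forces the Taylor coefficients $\frac{f^{(n)}(0)}{n!}$ to be bounded and then combines the triangle (Minkowski) inequality with the hypothesis $\|\psi^{(n)}\|_{L^2(\mathbb{R}_+)}\le c(n)\|h\|_{L^2(\mathcal{I})}$ and $c\in\ell^1$ to get absolute convergence. For part (a) the paper simply declares the claim ``immediate,'' implicitly reading $\partial_t^n(p\cdot\chi_I)$ as $p^{(n)}\cdot\chi_I$ so that the defining series terminates after $\deg(p)+1$ terms --- exactly the finite-sum resolution you propose; your worry about the distributional derivatives of $\chi_I$ at the endpoints of $I$ is legitimate and is not addressed in the paper, so under a strict reading of ``domain of $\partial_t^n$'' your more careful treatment is actually the more honest one.
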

\begin{proof}
Part $a)$ is immediate. For $b)$ we have: Since $R_1>1$, the sequence 
$\left\{\frac{f^{(n)}(0)}{n!}\right\}_{n\in \mathbb{N}}$ 
 is bounded; therefore, there is a positive constant $C$ such that 
 $|\frac{f^{(n)}(0)}{n!}|\leq C$ for every $n \in \mathbb{N}$. Now, by 
 Minkowski inequality we have,
\begin{eqnarray*}
||\sum_{n=0}^\infty \frac{f^{(n)}(0)}{n!} \psi^{(n)}||_{L^2(\mathbb{R}_+)}& \leq & \sum_{n=0}^\infty ||\frac{f^{(n)}(0)}{n!} \psi^{(n)}||_{L^2(\mathbb{R}_+)}\\
&\leq & C \sum_{n=0}^\infty ||\psi^{(n)}||_{L^2(\mathbb{R}_+)}\\
&\leq & C ||h||_{L^2(\mathcal{A})} \sum_{n=0}^\infty c(n)\\
&\leq & C ||c||_{l^1(\mathbb{N})} ||h||_{L^2(\mathcal{A})} \; .
\end{eqnarray*}
\end{proof}

%\int_0^{\infty}\left|\sum_{n=0}^\infty \frac{f^{(n)}(0)}{n!} \psi^{(n)}(t)\right|^2 dt
%&\leq &
% \int_0^{\infty}\left\{\sum_{n=0}^\infty \left|\frac{f^{(n)}(0)}{n!}\right|\left|\psi^{(n)}(t)\right|\right\} ^2 dt\\
%&\leq &
%C^2 \int_0^{\infty}\left\{\sum_{n=0}^\infty |\psi^{(n)}(t)| \right\}^2 dt \\
%&\leq &
%C^2 \left(\sum_{n=0}^\infty c_n\right)^2 \int_0^{\infty} \left| h(t) \right|^2 dt

%
\begin{rem}
We present two large families of functions $\psi$ which meet
conditions $b)$ of Lemma $\ref{Avec}$.
\begin{itemize}
\item Let $k>1$ be a parameter and consider the functions 
$\psi(t):=e^{-\frac{t}{k}}, t\geq 0$; then we have 
$\psi^{(n)}(t)=\frac{(-1)^n}{k^n}e^{-\frac{t}{k}}$, therefore 
$|\psi^{(n)}(t)|=\frac{1}{k^n}e^{-\frac{t}{k}}$ and if we define 
$c(n):=\frac{1}{k^n}$, we have $\{c(n)\}_{n \in \mathbb{N}}\in l^1(\mathbb{N})$.
\item We recall that an arbitrary entire function $\phi$ of exponential
type $\tau$ which is also in $L^2(\mathbb{R})$, satisfies the generalized 
$L^2$-Berstein inequality
$$||\phi^{(n)}||_{L^2(\mathbb{R})}\leq \tau^n ||\phi||_{L^2(\mathbb{R})}\; ,$$ see 
$\cite[Chp. 3]{Ni}$. We denote by $Exp_1^2(\mathbb{C})$ the space of entire 
functions of exponential type $\tau<1$ which are $L^2$-functions on $\mathbb{R}$.
Then, for any $\phi \in Exp_1^2(\mathbb{C})$ with exponential type $\tau_\phi > 0$,
the function $\psi := \chi_{\mathbb{R}_+} \cdot \phi$ satisfies part $b)$ of the
lemma with $h = \phi$. In particular, let $\phi$ be a smooth function on 
$\mathbb{R}$ with compact support in $[-\tau,\tau]\subset \mathbb{R}$ with 
$\tau <1$. Then its Fourier transform
$$\mathcal{F}(\phi)(t):=(2\pi)^{-1/2}\int_{-\tau}^{\tau} e^{-ixt}\phi(x)dx$$
is an $L^2(\mathbb{R})$-function and it has an extension to an entire function 
$\Phi$ which is of exponential type 
$\tau$. The function $\psi:= \chi_{\mathbb{R}_+} \cdot \Phi$ satisfies part $b)$ 
of the lemma.
\end{itemize}
\end{rem}

Definition \ref{def0} of the operator $f(\partial_t)$ restricts to the present 
$L^2$-context. We state it explicitly for the reader's convenience.

\begin{defi} \label{def00}
Let $f$ be an analytic function on a region which contains zero and the half-plane $\{ s \in \mathbb{C} : Re(s) > 0 \}$, 
and let $\mathcal{H}$ be the space of all functions which are analytic on regions of $\mathbb{C}$.
We consider the subspace $D_f$ of $L^2(\mathbb{R}_+) \times \mathcal{H}$
consisting of all the pairs $(\phi , r)$ such that
\begin{equation} \label{tr00}
\widehat{(\phi , r)} = f\,{\mathcal L}(\phi) - r
\end{equation}
belongs to the space $H^2(\mathbb{C}_+)$.
The domain of $f(\partial_t)$ as a linear operator from $L^2(\mathbb{R}_+) \times \mathcal{H}$
to $L^2(\mathbb{R}_+)$ is the set $D_f\,$. If $(\phi , r) \in D_f$ then
\begin{equation} \label{tr10}
f(\partial_t)\,(\phi , r) = {\mathcal L}^{- 1} (\,\,\widehat{(\phi,r)} \,\,) =
{\mathcal L}^{- 1} ( f\,{\mathcal L}(\phi)
- r ) \; .
\end{equation}
\end{defi}

We show that this definition is not empty, and that in fact the domain
$D_f$ is quite large. The fact that the Laplace transform is an unitary operator
plays an essential role at this point:

\begin{prop} \label{ext0}
Let $f$ be a function which is analytic on a region containing $\mathbb{C}_+$,
and let $R_1 > 1$ be the maximum radius of convergence of the Taylor series
$f_T(s) := \sum_{n=0}^\infty \frac{f^{(n)}(0)}{n!}\, s^n$.
Let $\phi$ be a smooth $f$-analytic vector for $\partial_t$ in
$L^2(\mathbb{R}_+)$ and suppose that the sequence $\{ d_{j} = \phi^{(j)}(0) \}$ 
satisfies the condition of Lemma $\ref{r_conv}$. Then, there exists an analytic function $r_e$ on
$\mathbb{C}_+$ such that $(\phi , r_e)$ is in the domain $D_f$ of $f(\partial_t)$.
\end{prop}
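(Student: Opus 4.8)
The plan is to take the obvious candidate for $r_e$, namely the series $(\ref{in_con111})$ with $d_j = \phi^{(j)}(0)$, and show that it admits an analytic continuation to all of $\mathbb{C}_+$ that does the job. First I would set $g := f(\partial_t)\phi = \sum_{n=0}^\infty \frac{f^{(n)}(0)}{n!}\partial_t^n\phi$; by the hypothesis that $\phi$ is an $f$-analytic vector this series converges in $L^2(\mathbb{R}_+)$, so $g$ is a well-defined element of $L^2(\mathbb{R}_+)$, and by the Paley-Wiener theorem (Theorem \ref{pw}) its transform $\mathcal{L}(g)$ lies in $H^2(\mathbb{C}_+)$. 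Because $\mathcal{L}$ is unitary from $L^2(\mathbb{R}_+)$ onto $H^2(\mathbb{C}_+)$, it is in particular continuous, so I may pass the transform through the sum: writing $g_N := \sum_{n=0}^N \frac{f^{(n)}(0)}{n!}\partial_t^n\phi$, one has $\mathcal{L}(g_N) \to \mathcal{L}(g)$ in the norm of $H^2(\mathbb{C}_+)$, and since norm convergence in $H^2(\mathbb{C}_+)$ forces locally uniform convergence on $\mathbb{C}_+$, the partial sums $\mathcal{L}(g_N)(s)$ converge to $\mathcal{L}(g)(s)$ at every $s$ with $Re(s)>0$.

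Next I would evaluate each partial sum explicitly. Since $\phi$ is smooth and every derivative $\partial_t^k\phi$ lies in $L^2(\mathbb{R}_+) \subset \mathcal{T}_0$ (a Cauchy-Schwarz estimate gives absolute convergence of the defining integral for $Re(s)>0$), Proposition \ref{proplaplace} applies and yields the differentiation rule for each $n$. Substituting it into $g_N$ reproduces exactly the bookkeeping of $(\ref{aux1})$--$(\ref{xxx})$:
\[
\mathcal{L}(g_N)(s) = f_N(s)\,\mathcal{L}(\phi)(s) - r_N(s),
\]
where $f_N(s) = \sum_{n=0}^N \frac{f^{(n)}(0)}{n!}s^n$ is the $N$-th Taylor polynomial of $f$ and $r_N(s)$ is the $N$-th partial sum of the series $(\ref{in_con111})$ with $d_j = \phi^{(j)}(0)$. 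As $N\to\infty$, $f_N(s)\to f(s)$ on the disk $|s|<R_1$ (there $f$ coincides with its Taylor series), while the hypothesis that $\{\phi^{(j)}(0)\}$ satisfies the condition of Lemma \ref{r_conv} guarantees $r_N\to r$ on that same disk, with $r$ analytic there. Comparing the two limits on the nonempty overlap $\mathbb{C}_+\cap\{|s|<R_1\}$ gives the identity $\mathcal{L}(g)(s) = f(s)\mathcal{L}(\phi)(s) - r(s)$, that is, $r(s) = f(s)\mathcal{L}(\phi)(s) - \mathcal{L}(g)(s)$ there.

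This last identity is the crux, because its right-hand side is analytic on all of $\mathbb{C}_+$: the function $\mathcal{L}(\phi)\in H^2(\mathbb{C}_+)$ is analytic, $f$ is analytic on a region containing $\mathbb{C}_+$, and $\mathcal{L}(g)\in H^2(\mathbb{C}_+)$. I would therefore define $r_e$ on $\mathbb{C}_+$ by $r_e := f\,\mathcal{L}(\phi) - \mathcal{L}(g)$; it is analytic on $\mathbb{C}_+$ and, by the previous paragraph, agrees with the series $r$ on $\mathbb{C}_+\cap\{|s|<R_1\}$, so it is the analytic continuation of $(\ref{in_con111})$ to the whole half-plane. By construction $f\,\mathcal{L}(\phi) - r_e = \mathcal{L}(g)\in H^2(\mathbb{C}_+)$, which is exactly the membership required by Definition \ref{def00}; hence $(\phi, r_e)\in D_f$, as claimed.

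I expect the main obstacle to be the two interchanges at the start: justifying that the Laplace transform commutes with the infinite sum (handled by unitarity together with the fact that $H^2$-norm convergence implies locally uniform convergence, which makes the termwise computation legitimate pointwise on $\mathbb{C}_+$), and confirming that the differentiation formula of Proposition \ref{proplaplace} is available for every $\partial_t^n\phi$. Once those are in place, the only conceptual step is to notice that Lemma \ref{r_conv} furnishes $r$ merely on $|s|<R_1$, and that the overlap identity is precisely what upgrades it, by analytic continuation, to a genuinely $\mathbb{C}_+$-analytic generalized initial condition.
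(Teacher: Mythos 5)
Your proposal is correct and follows essentially the same route as the paper's proof: both use the $f$-analytic-vector hypothesis together with the unitarity of $\mathcal{L}:L^2(\mathbb{R}_+)\to H^2(\mathbb{C}_+)$ to identify $f\,\mathcal{L}(\phi)-r$ with $\mathcal{L}\bigl(\sum_n \tfrac{f^{(n)}(0)}{n!}\partial_t^n\phi\bigr)\in H^2(\mathbb{C}_+)$ on the disk $|s|<R_1$ (where Lemma \ref{r_conv} makes $r$ analytic), and then define $r_e:=f\,\mathcal{L}(\phi)-\mathcal{L}(g)$ as the analytic continuation to all of $\mathbb{C}_+$. Your added care in justifying the interchange (norm convergence in $H^2$ implying locally uniform convergence, and $\partial_t^k\phi\in\mathcal{T}_0$ so that Proposition \ref{proplaplace} applies) only makes explicit what the paper leaves implicit.
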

\begin{proof}
From equation (\ref{in_con111}) and Lemma \ref{r_conv} we can define the following 
analytic function $\hat{\phi}$ on $\{ z \in \mathbb{C} : |z | < R_1 \}\,$:
\begin{eqnarray*}
\hat{\phi}(s)&:=& \sum_{n=0}^\infty \frac{f^{(n)}(0)}{n!}s^n \mathcal{L}(\phi)(s)-\sum_{n=1}^\infty \sum_{j=1}^n \frac{f^{(n)}(0)}{n!}\,
\phi^{(j-1)}(0) \, s^{n-j}\\
&=&\sum_{n=0}^\infty \frac{f^{(n)}(0)}{n!}\left( s^n \mathcal{L}(\phi)(s)-\sum_{j=1}^n \phi^{(j-1)}(0) \, s^{n-j}\right)\\
&=&\sum_{n=0}^\infty \frac{f^{(n)}(0)}{n!}\mathcal{L}(\partial_t^n(\phi))(s)\\
&=&\sum_{n=0}^\infty \frac{f^{(n)}(0)}{n!}\int_{0}^{\infty}e^{-st}\partial_t^n(\phi)(t)dt\; .
\end{eqnarray*}
Using this last equality and the fact that the Laplace transform is an isometric isomorphism from $L^2(\mathbb{R}_+)$ onto $H^2(\mathbb{C}_+)$, see Theorem \ref{pw},
we have
\begin{equation}\label{eq_exta}
\hat{\phi}(s)=\int_{0}^{\infty}e^{-st}\sum_{n=1}^\infty \frac{f^{(n)}(0)}{n!}\partial_t^n(\phi)(t)dt \; .
\end{equation} 
Therefore, on the disk $\{ z \in \mathbb{C} : |z | < R_1 \}\,$ we have the equation
\begin{equation} \label{ext}
\mathcal{L} \left( \sum_{n=1}^\infty \frac{f^{(n)}(0)}{n!}\partial_t^n(\phi)(t) \right)(s) = 
f(s) \mathcal{L}(\phi)(s) - r(s) \; ,
\end{equation}
where $r(s) = \sum_{n=1}^\infty \sum_{j=1}^n \frac{f^{(n)}(0)}{n!}\,
\phi^{(j-1)}(0) \, s^{n-j}$.

\smallskip

Now, we stress the fact that the right hand side of (\ref{eq_exta}) belongs to 
the Hardy space $H^2(\mathbb{C}_+)$; it follows that we can extend $\hat{\phi}(s)$ 
via analytic continuation to a function 
$\hat{\phi}_e$ on the half-plane $Re(s)>0$. This function belongs to the Hardy 
space $H^2(\mathbb{C}_+)$ by construction. Also, we note that
Equation (\ref{ext}) implies
\begin{equation}\label{eq_extar}
r(s)=f(s)\mathcal{L}(\phi)(s)-\hat{\phi}_e(s)
\end{equation}
for $|s|<R_1$. However, the right hand side of (\ref{eq_extar}) is
defined on $Re(s) > 0$, and therefore it defines
an analytic continuation $r_e$ of the series $r$ to the half-plane $Re(s)>0$. 
Thus, on $Re(s)>0$ we have the equation:
$$\hat{\phi}_e(s)= f(s)\mathcal{L}(\phi)(s)-r_e(s)\; .$$
Since $\hat{\phi}_e$ is in $H^2(\mathbb{C}_+)$, we have that $(\phi , r_e) \in D_f$.
\end{proof}

\begin{example}\label{exa_riemann}
It follows from our discussion on the function $\zeta_h$ introduced in Section 4,
that $\zeta_h$ is analytic around zero for $h > 1$ large enough, and that therefore
we have the power series expansion
$$\zeta_h(s)=\sum_{n=0}^{\infty}a_n(h)s^n \; .$$ Again for appropriate
$h >1$, we can assume that its maximum radius of convergence is $R_1 > 1$.
Then, Lemma \ref{Avec} implies that there exists a large class of
$\zeta_h$-analytic vectors $\phi \in L^2(\mathbb{R}_+)$ and moreover,
the above proposition applies. Thus, $\zeta_h(\partial_t)$
is a well-defined operator on (a subspace of) $L^2(\mathbb{R}_+)\times \mathcal{H}$.
\end{example}

\smallskip

An easy corollary of Proposition \ref{ext0} is the following:

\begin{cor}\label{cor_fin}
Let $f$ be an entire function and let $\phi$ be a smooth $f$-analytic vector for 
$\partial_t$ in $L^2(\mathbb{R}_+)$.
Suppose that the sequence $\{ d_{j} = \phi^{(j)}(0) \}$ satisfies
$$d_j\leq CR^j$$ for $0<R<1$. Then $(\phi,r)\in D_f$, in which $r$ is the series 
defined in the above proof.
\end{cor}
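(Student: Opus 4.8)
The plan is to read this off from Proposition~\ref{ext0} together with the corollary to Lemma~\ref{r_conv}, treating the inequality $d_j \leq CR^j$ as a bound $|d_j| \leq CR^j$ on the modulus (the $d_j = \phi^{(j)}(0)$ being in general complex). Two things must be verified: that the present hypotheses imply those of Proposition~\ref{ext0}, and that in the entire case the function $r_e$ returned by that proposition is nothing but the series $r$ itself. I note at the outset that $\phi$ is assumed to be a smooth $f$-analytic vector for $\partial_t$ on $L^2(\mathbb{R}_+)$, so that hypothesis of Proposition~\ref{ext0} is available for free.

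First I would observe that, since $f$ is entire, the maximum radius of convergence $R_1$ of its Taylor series is $\infty$, so in particular $R_1 > 1$ as required. Next I would check the convergence condition of Lemma~\ref{r_conv}: writing the series $(\ref{eq_NB})$ as $\sum_{j=1}^\infty d_{j-1}\, s^{-j}$, the bound $|d_{j-1}| \leq C R^{j-1}$ gives $|d_{j-1}\, s^{-j}| \leq C R^{j-1} |s|^{-j}$. On any compact $K \subseteq \{|s| > R\}$ one has $|s| \geq \rho$ for some $\rho > R$, whence each term is dominated by $C R^{-1}(R/\rho)^{j}$, a convergent geometric majorant independent of $s \in K$. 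By the Weierstrass $M$-test the series converges uniformly on compact subsets of $\{|s| > R\}$, which is exactly the hypothesis of Lemma~\ref{r_conv}. Since $f$ is entire, the corollary to that lemma then applies and shows that the series $r$ defined by $(\ref{in_con111})$ is in fact an \emph{entire} function.

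Finally I would invoke Proposition~\ref{ext0}: under these hypotheses there is an analytic $r_e$ on $\mathbb{C}_+$ with $(\phi, r_e) \in D_f$, and by construction $r_e$ is the analytic continuation to $\mathbb{C}_+$ of the series $r$. But we have just shown that $r$ is already entire, so its restriction to $\mathbb{C}_+$ is analytic and, by uniqueness of analytic continuation, $r_e = r$ on $\mathbb{C}_+$. Hence $(\phi, r) \in D_f$, as claimed. The only point requiring genuine care — this being an easy corollary — is precisely this last identification $r_e = r$: one must be sure that the ``analytic continuation'' produced in Proposition~\ref{ext0} recovers the original series and not merely some function agreeing with it near zero, which is guaranteed here exactly because the entirety of $r$ (from the corollary to Lemma~\ref{r_conv}) forces the continuation to be unique.
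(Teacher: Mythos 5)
Your proposal is correct and follows essentially the same route as the paper: the bound $|d_j|\leq CR^j$ with $0<R<1$ yields a geometric majorant verifying the uniform-convergence hypothesis of Lemma~\ref{r_conv}, after which Proposition~\ref{ext0} gives the conclusion. Your extra care in identifying $r_e$ with $r$ (via entirety of the series when $f$ is entire, so that equation~(\ref{eq_extar}) holds on all of $\mathbb{C}$) is a detail the paper leaves implicit, but it is the same argument.
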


The proof of Corollary \ref{cor_fin} consists in noting that the stated 
hypotheses allow us to apply Lemma 2.3.

\smallskip

Proposition \ref{ext0} and Corollary \ref{cor_fin} imply
that (if $f$ is entire) the operator $f(\partial_t)$ is an operator in infinitely many 
derivatives on the space of smooth $f$-analytic vectors. In fact, let $\phi$ be a 
$f$-analytic vector for $\partial_t$ in $L^2(\mathbb{R}_+)$ and let
$$r(s) = \sum_{n=1}^\infty \sum_{j=1}^n \frac{f^{(n)}(0)}{n!}\,
\phi^{(j-1)}(0) \, s^{n-j} \; .$$
If conditions of corollary \ref{cor_fin} hold, then
%\begin{tw}
%Let $\psi$ be an $f$-analytic vector. Then,
$$\mathcal{L}\left( \sum_{n=0}^\infty \frac{f^{(n)}(0)}{n!} \partial_t^n(\phi) \right)
= f \mathcal{L}(\phi) - r = \mathcal{L}( f(\partial_t)\phi ) \; ,$$
and therefore
$$ f(\partial_t)\phi=\sum_{n=0}^\infty \frac{f^{(n)}(0)}{n!} \partial_t^n(\phi)\; .$$
%\end{tw}

%\smallskip
%
%\smallskip

%We show in the forthcoming work \cite{CPR_Borel} that this equation holds even for non-entire symbols.

\section{Discussion}

As mentioned in Section 1, the following nonlocal equation appears naturally in the study of a zeta nonlocal scalar field model in string theory (See \cite{D,D1,D2}; recall that we are using signature so that in the $1+0$ dimensional case the d'Alambert operator is $\partial_t^2$):
\begin{equation}\label{Zeq_04}
\zeta(\partial_t^2+h)\phi=\mathcal{AC}\sum_{n=0}^{\infty}n^{-h}\phi^n\; , \quad
t\geq 0 \; ,
\end{equation}
where $\mathcal{AC}$ means analytic continuation. We stress, after \cite{AV},  that it is natural to consider the restriction $t \geq 0$ since classical versions of cosmological models contain singularities at the beginning of time.

Equation (\ref{Zeq_04}) motivates the study of the following nonlocal linear equations
\begin{equation} \label{Zeq_044}
\zeta(\partial_t^2+h) \phi = J \; , \quad t \geq 0 \; ,
\end{equation}
for appropriate functions $J$. Interestingly, the behavior of the symbol $\zeta(s^2+h)$ is quite different 
to the behavior of the symbol $\zeta(s+h)$ appearing in Section 4.
We show here that a study of Equation (\ref{Zeq_044}) requires
a generalization of the theory developed in the above sections.

First of all,  from the properties of the Riemann zeta function, we observe that the symbol
\begin{equation} \label{nszf}
\zeta(s^2+h)=\sum_{n=0}^{\infty}\dfrac{1}{n^{s^2+h}}
\end{equation}
is analytic in the region $\Gamma:= \{ s \in \mathbb{C} : Re(s)^2-Im(s)^2>1-h \}$, which is not a half-plane; also we can note that its poles are the vertices of the hyperbolas $Re(s)^2-Im(s)^2=1-h $ and its critical region is the set $\{ s \in \mathbb{C} : -h< Re(s)^2-Im(s)^2<1-h \}$.
In fact, according to the value of $h$ we have:
\begin{itemize}
\item [i)]For $h>1$, $\Gamma$ is the region limited by the interior of the dark hyperbola $Re(s)^2-Im(s)^2=1-h $ containing the real axis:
\begin{center} 
\includegraphics[width=4cm, height=4cm]{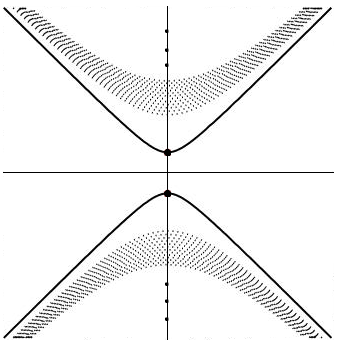}
\end{center}
%\end{figure}
{\footnotesize
The poles of $\zeta(s^2+h)$ are the vertices of dark hyperbola, indicated by two thick dots. The trivial zeroes of 
$\zeta(s^2+h)$ are indicated by thin dots on the imaginary axis; and the non-trivial zeroes are located on the darker 
painted region (critical region).}

\item [ii)]For $h<1$, $\Gamma$ is the interior of the dark hyperbola $Re(s)^2-Im(s)^2=1-h $ containing the imaginary axis:
\begin{center} 
\includegraphics[width=4cm, height=4cm]{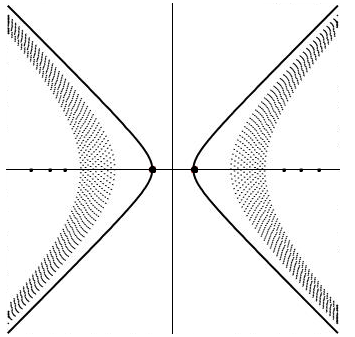}
\end{center}
{\footnotesize
The poles of $\zeta(s^2+h)$ are the vertices of dark hyperbola, indicated by two thick dots. The trivial zeroes of 
$\zeta(s^2+h)$ are indicated by thin dots on the real axis; the non-trivial zeroes are located on the darker painted 
region (critical region).}

\item [iii)]For $h=1$, $\Gamma$ is the interior of the cones
limited by the curves $y=|x|, y=-|x|$.
\begin{center} 
\includegraphics[width=4cm, height=4cm]{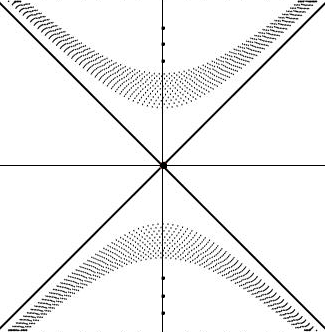}
\end{center}
{\footnotesize The pole of $\zeta(s^2+1)$ is the origin (vertex of
dark curves $y=|x|, y=-|x|$). The trivial zeroes of $\zeta(s^2+h)$
are indicated by thin dots on the imaginary axis; the non-trivial
zeroes are located on the darker painted region (critical
region).}
\end{itemize}

%We also note that,  since the Riemann zeta function has a pole in $s=1$, the function (\ref{nszf}) has poles at $s^2=1-h\, $. %Thus, there are only two poles for a given value of $h\not =1$ and one for $h=1$.

On the other hand, since the Riemann zeta function has an infinite number of zeroes on the critical strip 
(as famously proven by Hadamard and Hardy, see \cite{KaVo} for original references), we have that the function
$\zeta(s^2+h)$ also has an infinite number of zeroes; we denote the set of all such zeroes by $\mathcal{Z}$.
Using i), ii) and iii) we have that 
$$ \sup_{z\in \mathcal{Z}}|Re(z)|=+\infty.$$
This analysis implies that the expression
$\mathcal{L}^{-1}(\mathcal{L}(J)/\zeta_h)$ for the solution to equation (\ref{Zeq_044}) does not always make sense,
since the function $\mathcal{L}(J)/\zeta_h$ does not necessarily belongs to $H^p(\mathbb{C}_+)\,$. 

%interpretation of the action of the operator $\zeta(\partial_t^2+h)$ at $\phi$ given in definition  \ref{def0} is not adequate.
These observations mean that a new approach for the study of Equation (\ref{Zeq_04}) is necessary. 
We will present a method based on the Borel transform, see \cite{CPR,U}, in the fortcoming paper \cite{CPR_Borel}.

\paragraph{Acknowledgements}
A.C. has been supported by PRONABEC (Ministerio de Educaci\'on, Per\'u) and FONDECYT 
through grant \# 1161691; H.P. and E.G.R. have been
partially supported by the FONDECYT operating grants \# 1170571 and
\# 1161691 respectively. The authors also thank A. Koshelev and G. Calcagni for
their bibliographic remarks and their interest in our work.

\end{document}